\newtheorem{theorem}{Theorem}
\newtheorem{lemma}[theorem]{Lemma}
\newcommand{\area}{\mathbb{A}}
\newcommand{\perim}{\mathbb{P}}
\newcommand{\eps}{\varepsilon}
\newcommand{\E}{\mathbb{E}}
\newcommand{\Q}{\mathbb{Q}}
\newcommand{\R}{\mathbb{R}}
\newcommand{\N}{\mathbb{N}}
\let\given\givenbase
\let\sgiven\sgivenbase
\providecommand{\pb}[1]{{\sc #1} problem}
\title{Area and Perimeter of the Convex Hull of Stochastic Points}
\author{
	Pablo P\'erez-Lantero\thanks{
		Escuela de Ingenier\'ia Civil Inform\'atica,
		Universidad de Valpara\'iso, Chile. {\tt pablo.perez@uv.cl.} 
	}
}
\begin{document}

\maketitle

\begin{abstract}
Given a set $P$ of $n$ points in the plane, we study the computation of the
probability distribution function 
of both the area and perimeter of the convex hull of
a random subset $S$ of $P$. The random subset $S$ is formed by drawing each point 
$p$ of $P$ independently with a given rational probability $\pi_p$.
For both measures of the convex hull,
we show that it is \#P-hard to compute the probability that the measure is
at least a given bound $w$. For $\eps\in(0,1)$, we provide an algorithm
that runs in $O(n^{6}/\eps)$ time and returns a value that is between
the probability that the area is at least $w$, and the
probability that the area is at least $(1-\eps)w$.
For the perimeter, we show a similar algorithm running in $O(n^{6}/\eps)$ time.
Finally, given $\eps,\delta\in(0,1)$ and for any measure, we show an $O(n\log n+ (n/\eps^2)\log(1/\delta))$-time
Monte Carlo algorithm that
returns a value that, with probability of success at least $1-\delta$, differs at most $\eps$ 
from the probability that the measure is at least $w$.
\end{abstract}

\section{Introduction}\label{sec:intro}

Let $P$ be a set of $n$ points in the plane, where each point $p$ of $P$
is assigned a probability $\pi_p$. 
Given any subset $X\subset \mathbb{R}^2$, let $\area(X)$ and $\perim(X)$ denote the area and perimeter, respectively, 
of the convex hull of $X$.
In this paper, we study the random variables $\area(S)$ and $\perim(S)$, where
$S$ is a random subset of $P$, formed by drawing each point 
$p$ of $P$ independently with probability $\pi_p$.
We assume the model in which the probability $\pi_p$ of every point $p$ of $P$ is a rational number,
and where deciding whether $p$ is present in a random sample of $P$ can be done in constant time. 
Then, any random sample of $P$ can be generated in $O(n)$ time.
We show the following results:
\begin{enumerate}\itemsep0em
\item Given $w\ge 0$, computing $\Pr[\area(S)\ge w]$ is \#P-hard, even in the case where $\pi_p=\rho$ for all $p\in P$,
for every $\rho\in(0,1)$.

\item Given $w\ge 0$, computing $\Pr[\perim(S)\ge w]$ is \#P-hard, even in the case where 
$\pi_p\in\{\rho,1\}$ for all $p\in P$,
for every $\rho\in(0,1)$.

\item For any measure $\mathsf{m}\in\{\area,\perim\}$, $w\ge 0$, and $\eps\in(0,1)$,
a value $\sigma$  
so that $\Pr[\mathsf{m}(S)\ge w] \le \sigma \le \Pr[\mathsf{m}(S)\ge (1-\eps)w]$
can be computed in $O(n^{6}/\eps)$ time.

\item For any measure $\mathsf{m}\in\{\area,\perim\}$ and $\eps,\delta\in(0,1)$, a value $\sigma'$ satisfying
$\Pr[\mathsf{m}(S)\ge w]-\eps < \sigma' < \Pr[\mathsf{m}(S)\ge w] + \eps$ with
probability at least $1-\delta$, can be computed in $O(n\log n+ (n/\eps^2)\log(1/\delta))$ time.

\item If $P\subset[0,U]^2$ for some $U>0$,
then given $\eps\in(0,1)$ and $w\ge 0$, a value $\tilde{\sigma}$ satisfying
$\Pr[\area(S)\ge w+\eps] \le \tilde{\sigma} \le \Pr[\area(S)\ge w-\eps]$
can be computed in $O(n^4\cdot U^4/\eps^2)$ time.
\end{enumerate}
For the ease of explanation, we assume that the point set $P$ satisfies the next properties:
no three points of $P$ are collinear, and no two points of $P$ are in the same vertical or horizontal line.
All our results can be extended to consider point sets $P$ without these assumptions.

{\bf Notation:} 
Given three different points $p,q,r$ in the plane, let $\Delta(p,q,r)$ denote
the triangle with vertex set $\{p,q,r\}$, $\ell(p,q)$ denote the directed line through $p$ in direction to $q$, $h(p)$ denote
the horizontal line through $p$, $pq$ denote the segment with endpoints $p$ and $q$,
and $\overline{pq}$ denote the length of $pq$.
We say that a triangle defined by three vertices of the convex hull of a random sample $S\subseteq P$ is {\em canonical}
if the triangle contains
the topmost point of $S$.

{\bf Outline:}
In Section~\ref{sec:Pr}, we show that computing the probability that the
area is at least a given bound is \#P-hard, and provide the algorithms to approximate this probability.
In Section~\ref{sec:perim}, we show the results for the perimeter.

\section{Related work}

Stochastic finite point sets in the plane, as the one considered in this paper,
appear in a natural manner in many database
scenarios in which the gathered data has many false positives~\cite{AgrawalBSHNSW06,cormode2009,jorgensen2012}.
This model of random points differs from the model in which $n$ points are chosen
independently at random in some Euclidean region, and questions
related to the final positions of the points are considered~\cite{har2011expected,schneider200412,wendel1962problem}.   

In the last years, algorithmic problems and solutions considering stochastic points have emerged.
In 2011, Chan et al.~\cite{kamousi2011} studied the computation of the expectation $\E[MST(S)]$,
where $S$ is a random sample drawn on the point set $P$ and $MST(S)$ is the total length
of the minimum Euclidean spanning tree of $S$. Each point is included in the sample
$S$ independently with a given rational probability. 
They motivate this problem from the 
following three situations: the point set $P$ may denote all possible customer locations, each with a known 
probability of being present at an instant, or it may denote sensors that trigger and upload data at 
unpredictable times, or it may be a set of multi-dimensional observations, each with a confidence value.
Among other results, they proved that computing $\E[MST(S)]$ is \#P-hard
and provided a random sampling based algorithm running 
in $O((n^5/\eps^2)\log(n/\delta))$ time, that returns a $(1+\eps)$-approximation
with probability at least $1-\delta$.
In 2014, Chan et al.~\cite{kamousi2014} studied the probability that the distance of the closest 
pair of points is at most a given parameter, among $n$ stochastic points.
Computing the closest pair of points among a set of precise points is a classic and well-known problem with
an efficient solution in $O(n\log n)$ time. When introducing the stochastic imprecision,
computing the above probability becomes \#P-hard~\cite{kamousi2014}.

Foschini at al.~\cite{yildiz2011} studied in 2011 the expected volume of the union of $n$ stochastic 
axis-aligned hyper-rectangles, where each hyper-rectangle is present with a given probability. 
They showed that the expected volume can be computed in polynomial time 
(assuming the dimension is a constant), provided a data structure for maintaining the expected volume
over a dynamic family of such probabilistic hyper-rectangles, and 
proved that it is NP-hard to compute the probability that the volume exceeds a given value even
in one dimension, using a reduction from the \pb{SubsetSum}~\cite{Garey1979}.

With respect to the convex hull of stochastic points, in the same model that we consider (called {\em unipoint model}~\cite{PankajAgarwal2014}), 
Suri et al.~\cite{Suri2013} investigated the most likely convex hull of stochastic points,
which is the convex hull that appears with the most probability. They proved that such a 
convex hull can be computed in $O(n^3)$ time in the plane, and its computation is NP-hard in higher 
dimensions.

In a more general model of discrete probabilistic points (called {\em multipoint model}~\cite{PankajAgarwal2014}), each of the $n$ points 
either does not occur or occurs at one of finitely many locations, following its 
own discrete probability distribution.
In this model that generalizes the one considered in this paper, 
Agarwal et al.~\cite{PankajAgarwal2014} gave exact computations and approximations of the probability
that a query point lies in the convex hull, and
Feldman et al.~\cite{Munteanu2014} considered the minimum enclosing ball problem and
gave a $(1+\eps)$-approximation.
In this more general model and other ones, Jorgensen et al.~\cite{jorgensen2012} studied 
approximations of the distribution functions of the solutions of geometric shape-fitting problems, and described the
variation of the solutions to these problems with respect to the uncertainty of the points.
They noted that in the multipoint model the distribution of area or perimeter of the convex hull may have
exponential complexity if all the points lie on or near a circle.

More recently, in 2014, Li et al.~\cite{li2014} considered a set of $n$ points in the plane colored with $k$ colors, and
studied, among other computation problems, the computation of the expected area or perimeter of the convex hull  
of a random sample of the points. Such random samples are obtained by picking for each color a point of that color 
uniformly at random. They proved that both expectations can be computed in $O(n^2)$ time. We note that their
arguments can be used to compute both $\E[\area(S)]$ and $\E[\perim(S)]$, each one in $O(n^2)$ time. 
In the case of the expected perimeter, similar arguments were discussed by Chan et al.~\cite{kamousi2011}.

\section{Probability distribution function of area}\label{sec:Pr}

\subsection{\#P-hardness}\label{sec:Pr-nphard}

\begin{theorem}\label{theo:hard-prob}
Given a stochastic point set $P$ at rational coordinates, an integer $w>0$,
and a probability $\rho\in(0,1)$,
it is \#P-hard to compute the probability $\Pr[\area(S)\ge w]$ that
the area of the convex hull of a random sample $S\subseteq P$ is at least $w$,
where each point of $P$ is included in $S$ independently with probability $\rho$.
\end{theorem}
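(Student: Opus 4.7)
The plan is to reduce from a known \#P-hard counting problem via a geometric gadget; a natural candidate is counting, given positive integers $a_1,\dots,a_n$ and target $W$, the subsets $T\subseteq\{1,\dots,n\}$ with $\sum_{i\in T}a_i\ge W$. The obstacle is that hardness must hold even with a single fixed probability $\rho$, so the reduction cannot use probabilities to carry numerical information; all the arithmetic must live in the geometry of the point set and the threshold $w$.

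I would first establish an ``almost-deterministic anchor'' primitive. By placing $M=\poly(n)$ near-coincident copies of a desired anchor, each at probability $\rho$, the event that some copy is sampled has probability $1-(1-\rho)^M$, which I can drive to within $2^{-\poly(n)}$ of $1$. This lets me treat a few specific positions as essentially present with probability $1$; the complementary contribution to $\Pr[\area(S)\ge w]$ is an exponentially small error term that can be accounted for separately.

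With these anchors in hand, the main gadget places two anchor clusters far apart on a horizontal baseline and $n$ signal points above it, one per item. The coordinates are tuned so that, conditional on both anchors being represented, the convex hull area decomposes as a base value plus $\alpha\sum_{i\in T}a_i$, where $T$ is the set of sampled signal points and $\alpha$ is a fixed rational constant. A concrete candidate is to locate the signal points along a strictly concave arc so that every sampled one becomes a vertex of the upper hull, and then to use the canonical-triangle decomposition flagged in the paper's preamble to express the area as a sum of per-signal-point contributions calibrated to $\alpha a_i$. Setting $w=\mathrm{base}+\alpha W$ then converts $\area(S)\ge w$ into the subset-sum event $\sum_{i\in T}a_i\ge W$.

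The final step extracts the count from the oracle. The returned value is $\sum_T \rho^{|T|}(1-\rho)^{n-|T|}\,[\sum_{i\in T}a_i\ge W]$, a polynomial in $\rho$ whose cardinality-refined coefficients $N_k=\#\{T:|T|=k,\ \sum_{i\in T}a_i\ge W\}$ aggregate to the desired count. Since $\rho$ is fixed, I would recover the $N_k$'s by polynomial interpolation, issuing polynomially many oracle queries on variants of the base instance obtained through controlled modifications of the gadget (for example, varying anchor-cluster sizes or promoting selected signal points to near-certain by cluster replication) and solving the resulting linear system. The main obstacle I anticipate is the gadget step: ensuring that the area really does decompose as a clean linear function of the subset sum without cross-terms between signal points, since convex-hull area is generically not additive in the vertex set; the concave-arc placement combined with the canonical-triangle bookkeeping is the most promising route, and care is also needed to keep the interpolation system nonsingular at the given $\rho$.
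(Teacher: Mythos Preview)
Your proposal has the right high-level shape, but the central gadget does not work as stated, and you correctly flag this as the main obstacle without resolving it. With only two baseline anchors and signal points $s_1,\dots,s_n$ on a concave arc, the convex-hull area is \emph{not} a linear function of the sampled subset $T$: the canonical-triangle decomposition triangulates from a fixed apex, so each triangle involves a \emph{pair} of consecutive hull vertices, and which pairs occur depends on all of $T$, not on each $s_i$ separately. In fact one can check that for $n\ge 3$ no placement of $s_1,\dots,s_n$ above a segment $AB$ makes $\area(\{A,B\}\cup\{s_i:i\in T\})$ an affine function of the indicator vector of $T$. The paper's fix is to use $n{+}1$ anchor-type points $p_1,\dots,p_{n+1}$ \emph{interleaved} with the $n$ signal points $q_1,\dots,q_n$, all in convex position; then, conditional on every $p_i$ being present, including or excluding $q_j$ changes the area by exactly $\area(\{p_j,q_j,p_{j+1}\})=b\cdot a_j$, independently of the other $q$'s, and linearity is exact.

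Two further simplifications in the paper relative to your sketch. First, instead of forcing anchors by cluster replication and bounding an error term, the paper leaves the $p_i$'s genuinely stochastic at probability $\rho$ but arranges the geometry (property~(f)) so that if any $p_i$ is absent the area drops strictly below the base value $G$; hence the event $\area(S)=G+bt$ has probability \emph{exactly} zero on the bad conditioning, and no error bookkeeping is needed. Second, your interpolation step to recover the cardinality-stratified counts $N_k$ from a single fixed $\rho$ is left vague, and it is not clear the proposed instance variants yield a nonsingular system in polynomially many calls. The paper avoids this entirely by first preprocessing the \#\textsc{SubsetSum} instance (adding $1+\sum_i a_i$ to every $a_i$ and $k$ times that to $t$) so that only $k$-element subsets can hit the target; then $\Pr[\area(S)=G+bt]=f(t)\cdot\rho^{n+k+1}(1-\rho)^{n-k}$ with a single known coefficient, and two oracle calls at thresholds $G+bt$ and $G+bt+1$ recover $f(t)$ exactly.
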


\begin{proof}
We show a Turing reduction from the \pb{\#SubsetSum}
that is \#P-complete~\cite{faliszewski2009}. Our Turing reduction
assumes an unknown algorithm (i.e.\ oracle) $\mathcal{A}(P,w)$ computing $\Pr[\area(S)\ge w]$,
that will be called twice.
The \pb{\#SubsetSum} receives as input a set $\{a_1,\ldots,a_n\}\subset \N$ of $n$
numbers and a target $t\in\N$, and counts the number of subsets
$J\subseteq [1..n]$ such that $\sum_{i\in J}a_j=t$. It
remains \#P-hard if the subsets $J$ to count must also satisfy $|J|=k$, for given $k\in[1..n]$.
Furthermore, we can add a large value (e.g.\ $1+a_1+\dots+a_n$) to every $a_i$, and add $k$ times this value
to the target $t$, so that in the new instance only $k$-element index sets $J$ can add
up to the new target.
Let $(\{a_1,\ldots,a_n\},t,k)$
be an instance of this restricted \pb{\#SubsetSum}. Then, by the above observations, we assume that 
only sets $J\subseteq[1..n]$ with $|J|=k$ satisfy $\sum_{i\in J}a_j=t$.
To show that computing $\Pr[\area(S)\ge w]$ is \#P-hard,
we construct in polynomial time the point set $P$ consisting of the $2n+1$ stochastic points $p_1,p_2,\ldots,p_{n+1}$ and $q_1,q_2,\ldots,q_n$
with the next properties (see Figure~\ref{fig:fig3}):
\begin{enumerate}[(a)]
\item\label{item:1} $P$ is in convex position and its elements appear
as $p_1,q_1,p_2,q_2,\ldots,p_n,q_n,p_{n+1}$ clockwise;

\item the coordinates of $p_1,\ldots,p_{n+1}$ and $q_1,\ldots,q_n$ are rational
numbers, each equal to the fraction of two polynomially-bounded natural numbers;

\item $\pi_{p}=\rho$ for every $p\in P$; 

\item for some positive $b\in\N$, $\area(\{p_j,q_j,p_{j+1}\})=b\cdot a_j\in\N$ for all $j\in[1..n]$; 

\item\label{item:2} $\area(\{p_1,\ldots,p_{n+1}\})\in\N$;

\item\label{item:f} $\area(\{q_i,p_{i+1},q_{i+1}\})$ for every $i\in[1..n-1]$, $\area(\{p_1,q_1,p_{n+1}\})$, and
$\area(\{p_1,q_n,p_{n+1}\})$ are all greater than $b\cdot (a_1+\dots+a_n)$.
\end{enumerate}
\begin{figure}[h]
    \centering
    \includegraphics[scale=0.55,page=4]{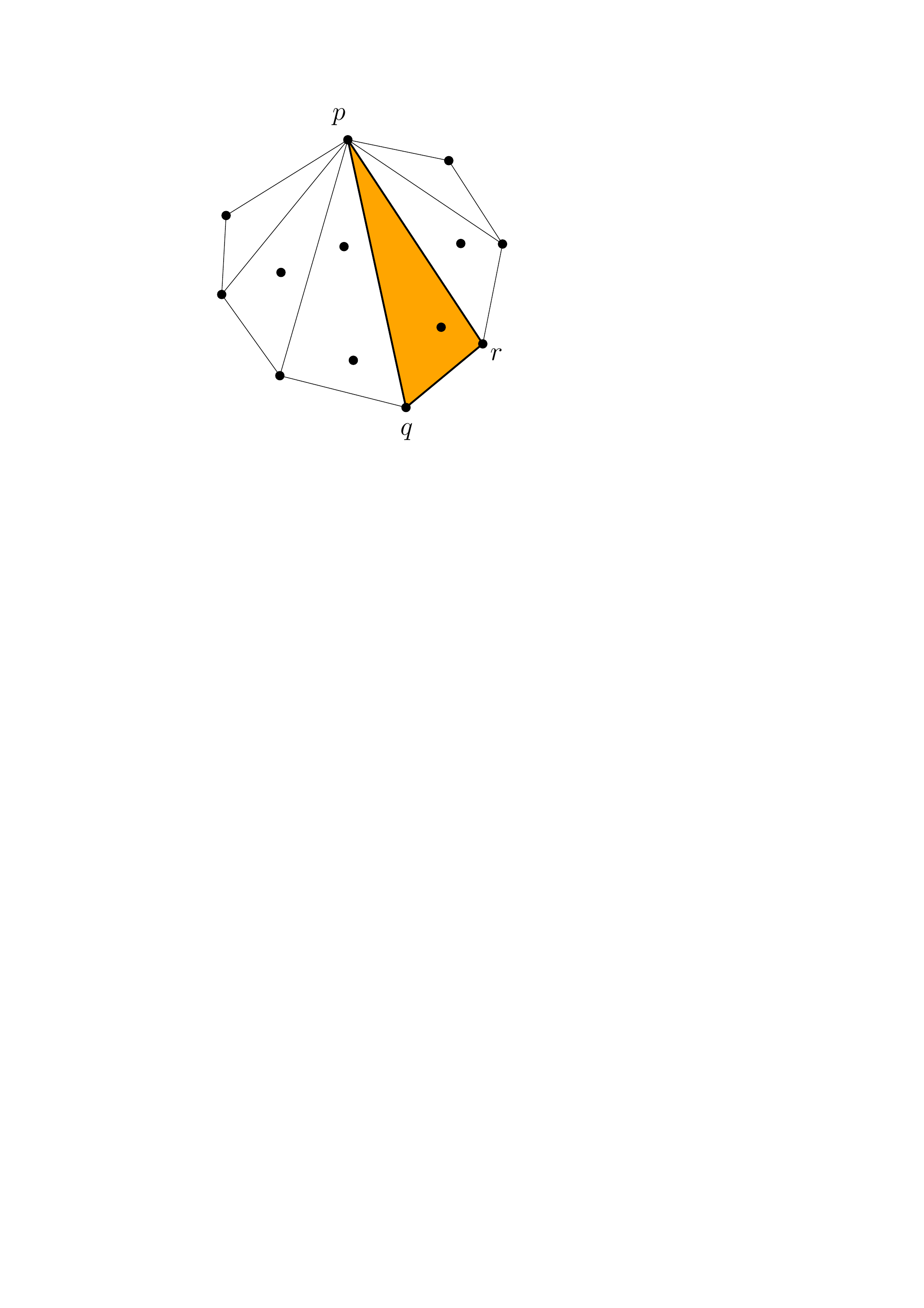}
    \caption{\small{The relative position of the points $p_1,\ldots,p_{n+1},q_1,\ldots,q_n$.}}
    \label{fig:fig3}
\end{figure}
Let $G=\area(\{p_1,\ldots,p_{n+1}\})$, and $S\subseteq P$ be any random
sample of $P$ such that $\{p_1,\ldots,p_{n+1}\}\subseteq S$. 
Let $J_S=\{j\in[1..n]\given q_j\in S\}$.
Observe that 
\begin{equation}
	\label{eq7}
	\area(S)  ~= ~ G + \sum_{j\in J_S} \area(\{p_j,q_j,p_{j+1}\})
			  ~=~ G + b\sum_{j\in J_S} a_j,
\end{equation}
and that for every $J\subseteq [1..n]$ the probability
that $J_S=J$ is precisely $\rho^{|J|}(1-\rho)^{n-|J|}$.
For $x\in \N$, let $f(x)$ denote the number of subsets $J\subseteq[1..n]$ with $x=\sum_{i \in J} a_i$,
which by the above assumptions satisfy $|J|=k$. 
Then, the \pb{\#SubsetSum} instance asks for $f(t)$.
Let $E$ stand for the event in which $\{p_1,\dots,p_{n+1}\}\subseteq S$, and $\overline{E}$ the complement of $E$.
Then,
\begin{equation}
	\label{eq8}
	\Pr[\area(S)= G+bt] ~=~ \Pr[\area(S)= G+bt\given E]\cdot\Pr[E] + 
	 \Pr[\area(S)= G+bt\given \overline{E}]\cdot\Pr[\overline{E}].
\end{equation}
When the event $E$ does not occur, that is, when some point $p\in \{p_1,\dots,p_{n+1}\}$
is not in $S$, we have that the triangle with vertex set $p$ and the two
vertices neighboring $p$ in the convex hull of $P$ is missing from the convex hull of $S$.
Let
\[
	\Delta ~=~ \min\left\{\begin{array}{l}
		\min_{i\in[1..n-1]} \area(\{q_i,p_{i+1},q_{i+1}\}), \\
		\area(\{p_1,q_1,p_{n+1}\}), \\
		\area(\{p_1,q_n,p_{n+1}\}).
	\end{array}\right.
\]
Then, by property~(\ref{item:f}), we have that
\[
	\area(S) ~\le~ \area(P) - \Delta
			 ~=~ G + b\cdot (a_1+\dots+a_n) -\Delta
			 ~<~ G.
\]
Hence, $\area(S)= G+bt$ cannot happen when conditioned in $\overline{E}$. We then
continue with equation~\eqref{eq8}, using equation~\eqref{eq7}, as follows:
\begin{eqnarray*}
	\Pr[\area(S)= G+bt] & = & \Pr[\area(S)= G+bt\given E]\cdot\Pr[E] \\
	& = & \Pr\left[\sum_{j\in J_S}a_j=t, |J_S|=k\right] \cdot \Pr[E] \\
	& = & \Pr\left[\sum_{j\in J_S}a_j=t\sgiven |J_S|=k\right]\cdot \Pr\bigl[|J_S|=k\bigr] \cdot \Pr[E] \\
	& = & \frac{f(t)}{\binom{n}{k}} \cdot \binom{n}{k} \rho^{k}(1-\rho)^{n-k}\cdot \rho^{n+1} \\
	& = & f(t) \cdot \rho^{n+k+1}(1-\rho)^{n-k}.
\end{eqnarray*}
Then, we have that
\[
	f(t)\cdot \rho^{n+k+1}(1-\rho)^{n-k} 
	~=~ \Pr[\area(S)\ge G+bt] - \Pr[\area(S)\ge G+bt+1].
\]
Calling twice the algorithm $\mathcal{A}(P,w)$, we can compute
$\Pr[\area(S)\ge G+bt]$ and $\Pr[\area(S)\ge G+bt+1]$, and then $f(t)$.
Hence, computing $\Pr[\area(S)\ge w]$ is \#P-hard.

We show now how the above stochastic point set $P$ can be built in polynomial time. Let $p_i=((2i-1)^2,2i-1)$
for every $i\in[1..n+1]$, and $s_j=((2j)^2,2j)$ for every $j\in[1..n]$. Observe that the points
$p_1,\ldots,p_{n+1},s_1,\ldots,s_n$ belong to $\N^2$, are in convex position, 
and they appear in the order $p_1,s_1,p_2,s_2,\ldots,p_n,s_n,p_{n+1}$ clockwise. 
Furthermore, $\area(\{p_i,s_i,p_{i+1}\})=1$ for all $i\in[1..n]$.
Let $\hat{a}=\max\{a_1,\ldots,a_n\}$, and $\lambda_i=a_i/n\hat{a}$ for $i\in[1..n]$. 
For every $i\in[1..n]$, we build the point $q_i$ on the segment $s_im_i$,
where $m_i=(p_i+p_{i+1})/2$ is the midpoint of the segment $p_ip_{i+1}$ (see Figure~\ref{fig:fig2}). 
\begin{figure}[t]
    \centering
    \includegraphics[scale=0.65,page=5]{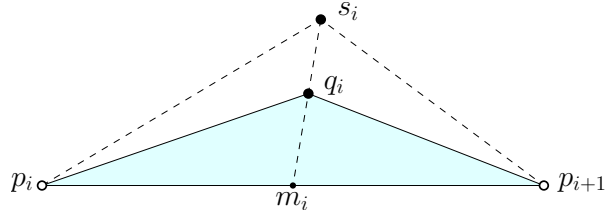}
    \caption{\small{Construction of the point $q_i$ from $p_i$, $s_i$, and $p_{i+1}$.}}
    \label{fig:fig2}
\end{figure}
The point $q_i$ is such that
\[
	\frac{~\overline{q_im_i}~}{\overline{s_im_i}}~=~ \lambda_i ~=~ \frac{a_i}{n\hat{a}} ~\le~ \frac{1}{n}.
\]
Observe then that $q_i\in\Q^2$, and $\area(\{p_i,q_i,p_{i+1}\})=\lambda_i$ for all $i\in[1..n]$.
Finally, we scale the point set $P=\{p_1,\ldots,p_{n+1},$ $q_1,\ldots,q_n\}$
by $2n\hat{a}$. Let $b=4n\hat{a}$.
We have now that 
\[
	\area(\{p_i,q_i,p_{i+1}\}) ~=~ \left(2n\hat{a}\right)^2 \cdot \lambda_i ~=~ b\cdot a_i ~\in~ \N,
\]
and that $G=\area(\{p_1,\ldots,p_{n+1}\})\in\N$
since every new $p_i$ has even integer coordinates (see Figure~\ref{fig:fig3}).
By considering $\pi_{p}=\rho$ for every $p\in P$,
the point set $P$ ensures the properties (\ref{item:1})-(\ref{item:2}). We now show that condition~(\ref{item:f}) is also ensured.
Before scaling by $2n\hat{a}$, we have that
\[
	m_i~=~ (4i^2+1,2i)
\]
and
\[
	q_i~=~ m_i+\lambda_i(s_i-m_i) ~=~ (4i^2 +1 -\lambda_i,2i).
\]
Then, for $i\in[1..n-1]$,
\begin{eqnarray*}
	\area(\{q_i,p_{i+1},q_{i+1}\}) & = & \frac{1}{2}\left|
											{\rm~det} \begin{bmatrix}
												4i^2 + 1 -\lambda_i  &  2i  & 1 \\
												(2i+1)^2             & 2i+1 & 1 \\
												4(i+1)^2 + 1 -\lambda_{i+1} & 2i+2 & 1
											\end{bmatrix}
										 \right| \\
		& = & \frac{1}{2}\left|
				{\rm~det} \begin{bmatrix}
					-\lambda_i  &  0  & 1 \\
				 	4i             & 1 & 1 \\
					8i + 4 -\lambda_{i+1} & 2 & 1
				\end{bmatrix}
			\right| \\
		& = & \frac{1}{2}\left(4-\lambda_i-\lambda_{i+1}\right) \\
		& > & 1 \\
		& \ge & \sum_{j\in[1..n]}\lambda_j.
\end{eqnarray*}
After scaling, we will have
\[
	\area(\{q_i,p_{i+1},q_{i+1}\}) ~>~ (2n\hat{a})^2 \cdot\sum_{j\in[1..n]}\lambda_j
								   ~=~ b\cdot(a_1+\dots+a_n).
\]
Similarly, assuming $n\ge 2$, before scaling we have
\begin{eqnarray*}
	\area(\{p_1,q_1,p_{n+1}\}) & = & \frac{1}{2}\left|
			{\rm~det} \begin{bmatrix}
				1  &  1  & 1 \\
				5-\lambda_1   & 2 & 1 \\
				(2n+1)^2 & 2n+1 & 1
			\end{bmatrix}
		\right|\\
	& = &  n\lambda_1 + 2n(n-1)\\
	& > & 1,
\end{eqnarray*}
and
\begin{eqnarray*}
	\area(\{p_1,q_n,p_{n+1}\}) & = & \frac{1}{2}\left|
			{\rm~det} \begin{bmatrix}
				1  &  1  & 1 \\
				4n^2+1-\lambda_n   & 2n & 1 \\
				(2n+1)^2 & 2n+1 & 1
			\end{bmatrix}
		\right| \\
	& = & n\lambda_n + (2n+1)(n-1) \\
	& > & 1.
\end{eqnarray*}
Then, after scaling we will have 
\[
	\area(\{p_1,q_1,p_{n+1}\}), \area(\{p_1,q_n,p_{n+1}\}) > b\cdot(a_1+\dots+a_n).
\]
This shows that property~(\ref{item:f}) is ensured. The result thus follows.
\end{proof}

\subsection{Approximations}\label{sec:Pr-apx}

The idea to approximate $\Pr[\area(S)\ge w]$ is to first
consider the fact that when the area of each triangle defined by points of $P$
is a natural number, we can compute such a probability in time
polynomial in $n$ and $w$ (see lemmas~\ref{lem:integer-areas-0} and~\ref{lem:integer-areas}). After that,
the idea follows by using conditionings of the samples $S$ on subsets of $P$ of
bounded area of the convex hull, 
to apply on such conditionings a rounding strategy to the area of each triangle
so that each area becomes a natural number, and to use Lemma~\ref{lem:integer-areas-0}
using the rounded areas instead of the real ones. 
With the formula of the total probability over the
conditionings, we get the approximation to $\Pr[\area(S)\ge w]$.


\begin{lemma}\label{lem:integer-areas-0}
Let $a\in P$, and $E_{a}$ denote the event
for the random sample $S\subseteq P$
in which $a$ is the topmost point of $S$. 
Assuming that the area of each triangle defined by points of $P$
is a natural number, given an integer $w\ge 0$,
the probability $\Pr[\area(S)\ge w\mid E_a]$
can be computed in $O(n^{3}\cdot w)$ time.
\end{lemma}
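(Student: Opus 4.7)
Conditioning on $E_a$ forces every point strictly above $a$ to be absent from $S$, so the conditional distribution of $S$ is $\{a\}$ together with an independent subset of $P'=\{p\in P:p\text{ lies below }a\}$ in which each $p\in P'$ is kept with probability $\pi_p$. I would sort $P'=\{p_1,\ldots,p_N\}$ in clockwise polar order around $a$ and use the fact that, because $a$ is topmost, the non-$a$ vertices of the convex hull of $S$ always appear in clockwise polar order around $a$. Consequently the hull area decomposes as the fan triangulation from $a$,
\[
\area(S)~=~\sum_{\ell=1}^{k-1}\area(\Delta(a,p_{i_\ell},p_{i_{\ell+1}})),
\]
where $a,p_{i_1},\ldots,p_{i_k}$ are the hull vertices in clockwise order. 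Each fan triangle is canonical and, by hypothesis, has natural area, so partial sums live in $\{0,1,\ldots,w\}$ after capping at $w$.

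The DP state is $f[i][j][A]$, defined for $0\le i<j\le N$ (with the sentinel $p_0:=a$) and $A\in\{0,\ldots,w\}$, as the probability (over $S$ conditioned on $E_a$) that the hull of $S\cap\{a,p_1,\ldots,p_j\}$ has $p_i,p_j$ as its last two vertices in polar order and accumulated fan area~$A$. The base case is $f[0][j][0]=\pi_{p_j}\prod_{m<j}(1-\pi_{p_m})$, corresponding to $p_j$ being the first non-$a$ hull vertex. For $i\ge 1$,
\[
f[i][j][A]~=~\pi_{p_j}\,\alpha(i,j)\!\!\sum_{\substack{0\le k<i\\(p_k,p_i,p_j)\text{ right turn}}}\!\!f[k][i]\bigl[A-\area(\Delta(a,p_i,p_j))\bigr],
\]
where $\alpha(i,j)=\prod_{i<m<j,\;p_m\text{ below }\ell(p_i,p_j)}(1-\pi_{p_m})$ forces every $p_m$ in the polar wedge between $p_i$ and $p_j$ lying outside $\Delta(a,p_i,p_j)$ to be absent from $S$; wedge points inside $\Delta(a,p_i,p_j)$ are free and contribute a factor~$1$.

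To finish, I would sum over terminal configurations:
\[
\Pr[\area(S)\ge w\mid E_a]~=~\sum_{A\ge w}\,\sum_{\substack{(i,j):\\(p_i,p_j,a)\text{ right turn}}}\!\!f[i][j][A]\,\beta(j),
\]
with $\beta(j)=\prod_{m>j}(1-\pi_{p_m})$ forcing the later-polar points absent so that $p_j$ is truly the last hull vertex of the full $S$ (the turn at $a$ itself is automatically clockwise). The trivial cases $|S|\le 2$ give area $0$ and matter only when $w=0$, in which case the answer is~$1$. The DP has $O(n^2 w)$ states and an $O(n)$-sum transition; $\alpha(i,j)$ and $\beta(j)$ can be precomputed in $O(n^3)$ and $O(n)$ time, giving $O(n^3 w)$ in total. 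The main subtlety is verifying that the local right-turn condition together with $\alpha(i,j)$ encodes both global convexity and the ``no missed hull vertex'' constraint; this hinges on the observation that in the polar wedge between $p_i$ and $p_j$ a point lies in $\Delta(a,p_i,p_j)$ iff it lies on the same side of $\ell(p_i,p_j)$ as~$a$.
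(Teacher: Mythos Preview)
Your proposal is correct and follows essentially the same approach as the paper: a dynamic program over the last two consecutive hull vertices (in angular order around the topmost point~$a$), using the fan decomposition of the hull into canonical triangles and exploiting that the integer areas let the area parameter be capped at~$w$. The paper phrases the DP backwards---its table $T[u,v,z]$ stores the probability that the \emph{remaining} hull (over the region $Z_{u,v}$ still to be swept) has area at least~$z$, with the convexity constraint encoded geometrically via the region $Z_{u,v}$ and the event $N_{u'}$---whereas you run it forwards, tracking the accumulated area and encoding convexity by the explicit right-turn test together with the absence factor~$\alpha(i,j)$. Both formulations have $O(n^2 w)$ states and an $O(n)$ transition, giving the same $O(n^3 w)$ bound. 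Your closing remark that the right-turn test plus~$\alpha(i,j)$ suffices is indeed the crux, and it checks out: a point in the polar wedge between $p_i$ and $p_j$ lies inside $\Delta(a,p_i,p_j)$ iff it is on the $a$-side of $\ell(p_i,p_j)$, and adding $p_j$ cannot kill any hull vertex with polar index $\le i$ (it can only kill $p_i$ itself, which is exactly what the right-turn test on $(p_k,p_i,p_j)$ detects). The terminal check $(p_i,p_j,a)$ right turn is in fact redundant---it holds automatically whenever $p_i,p_j$ are the last two hull vertices---but it is harmless.
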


\begin{proof}
We show how to compute
the probability $\Pr[\area(S)\ge w\given E_{a}]$ using dynamic programming. 
Let $B_a\subset P$ denote the points below the line $h(a)$, and 
$\mathbf{P}_{a}\subset (\{a\}\cup B_a)^2$ denote 
the set of pairs of distinct points $(u,v)$
such that either $v=a$, or
$v\neq a$ and $u$ is to the left of the directed line $\ell(a,v)$.
For a point $b\in B_a$, let $F_b$ stand for the event
that $b$ is the vertex following
$a$ in the counter-clockwise order of the vertices of the convex hull of $(S\cap B_a)\cup \{a\}$.
For every $(u,v)\in \mathbf{P}_{a}$, let $Z_{u,v}\subset\R^2$
denote the region of the points below the line $h(a)$, to the left of the line $\ell(a,u)$, and to
the left of the line $\ell(v,u)$ (see Figure~\ref{fig:fig7}). 
\begin{figure}[h]
    \centering
    \includegraphics[scale=0.7,page=7]{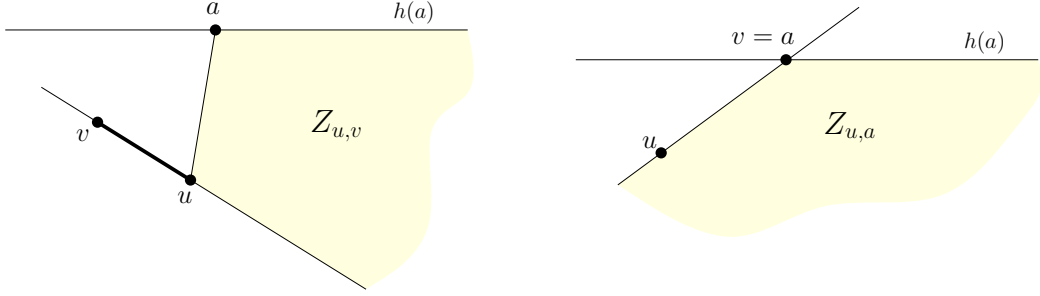}
    \caption{\small{The region $Z_{u,v}$. Left: general case. Right: particular case $v=a$.}}
    \label{fig:fig7}
\end{figure}
Now, for every $z\in[0..w]$, consider the entry $T[u,v,z]$ of the table $T$,
defined as
\[
	T[u,v,z] ~=~ \Pr\Bigl[\area\bigl((S\cap Z_{u,v})\cup \{a,u\}\bigr)\ge z\Bigr],
\]
which stands for the event that the convex hull of the random sample restricted to $Z_{u,v}$,
together with the points $a$ and $u$, is at least $z$. Then, note that
\begin{equation}\label{eq5}
	\Pr\Bigl[\area(S)\ge w\given E_{a}\Bigr] ~=~ \sum_{b\in B_a}\Pr\bigl[F_b\bigr]\cdot T[b,a,w].
\end{equation}
We show now how to compute $T[u,v,z]$ recursively for every $u,v,z$.
For every point $u'\in P\cap Z_{u,v}$,
let $N_{u'}$ stand for the event in which
$u'$ satisfies the following properties:
$u'\in S$ and $u'$ is the vertex of the convex hull of $(S\cap Z_{u,v})\cup \{a,u\}$ that
follows the vertex $u$ in counter-clockwise order, that is, $uu'$ is an edge of the convex hull of
$(S\cap Z_{u,v})\cup \{a,u\}$ and the elements of $(S\cap Z_{u,v})\setminus \{u'\}$ are to the left of the line $\ell(u,u')$
(see Figure~\ref{fig:fig8}(left)). Note that $u'$ is also the first point of $S\cap Z_{u,v}$ hit
by the line $\ell(v,u)$ when rotated counter-clockwise centered at $u$. Then, we have that
\begin{figure}[h]
    \centering
    \includegraphics[scale=0.7,page=8]{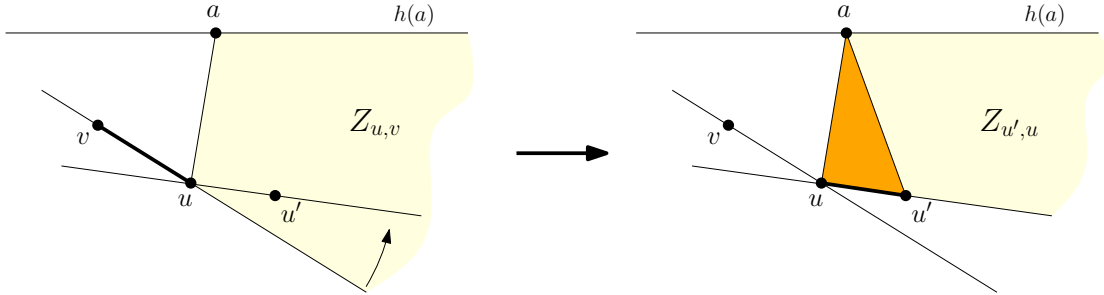}
    \caption{\small{Computing the entries $T[u,v,z]$ recursively.}}
    \label{fig:fig8}
\end{figure}
\[
	T[u,v,0]~=~1~~ \text{ for all } (u,v) \in \mathbf{P}_{a}
\]
and
\[
	T[u,v,z] ~=~ \sum_{u'\in P\cap Z_{u,v}} \Pr[N_{u'}]\cdot F(u,z,u')
\]
for all $(u,v) \in \mathbf{P}_{a}$ and $z\in[1..w]$,
where
\[
	F(u,z,u') ~=~ \left\{
			\begin{array}{l}
				 T\bigl[u',u,z-\area(\{u,u',a\})\bigr] ~~\text{if}~\area(\{u,u',a\})<z\\
				 \\
				 1,~~\text{if}~\area(\{u,u',a\})\ge z,
			\end{array}
		\right.
\]
(see Figure~\ref{fig:fig8}(right)). 
Since the points in $P\cap Z_{u,v}$ can be sorted radially around $u$ in $O(n)$ time,
by computing the dual arrangement of $P$ in $O(n^2)$ time as a unique preprocessing, the probabilities
$\Pr[N_{u'}]$, $u'\in P\cap Z_{u,v}$, can be computed in 
overall $O(n)$ time by following such radial sorting of $P\cap Z_{u,v}$. 
Then, all entries $T[u,v,z]$ can be computed in $O(n^{3}\cdot w)$ time.
Similarly, using the dual arrangement of $P$, the probabilities
$\Pr[F_{b}]$, $b\in B_a$, can be computed in overall $O(n)$ time,
and then $\Pr[\area(S)\ge w\given E_{a}]$ can be computed in linear time using the
information of table $T$ and equation~\eqref{eq5}.
Hence, $\Pr[\area(S)\ge w\mid E_a]$ can be computed in overall $O(n^{3}\cdot w)$ time. 
The result thus follows.
\end{proof}

\begin{lemma}\label{lem:integer-areas}
Assuming that the area of each triangle defined by points of $P$
is a natural number, given an integer $w\ge 0$,
the probability $\Pr[\area(S)\ge w]$
can be computed in $O(n^{4}\cdot w)$ time.
\end{lemma}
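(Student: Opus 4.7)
The plan is a direct application of the law of total probability, partitioning the sample space by the topmost point of $S$. The events $\{E_a\}_{a\in P}$ are pairwise disjoint, and together they cover every outcome in which $S\neq\emptyset$; the event $\{S=\emptyset\}$ corresponds to a degenerate convex hull of area zero, which only contributes when $w=0$ and can be handled as a trivial special case.

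First, I would compute $\Pr[E_a]$ for every $a\in P$. After sorting $P$ by decreasing $y$-coordinate in $O(n\log n)$ time, a single sweep through the sorted list, while maintaining the running product $\prod(1-\pi_p)$ over the points already visited, yields
\[
    \Pr[E_a] ~=~ \pi_a \prod_{p\in P:\, y(p)>y(a)}(1-\pi_p)
\]
in amortized constant time per point, for a total of $O(n\log n)$ time.

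Next, for each $a\in P$, I would invoke Lemma~\ref{lem:integer-areas-0} to compute $\Pr[\area(S)\ge w\mid E_a]$ in $O(n^{3}\cdot w)$ time. Combining these via
\[
    \Pr[\area(S)\ge w] ~=~ \sum_{a\in P}\Pr[E_a]\cdot \Pr[\area(S)\ge w\mid E_a]
\]
(plus the trivial contribution from $S=\emptyset$ when $w=0$) yields the desired probability in $O(n)$ additional time. The total running time is dominated by the $n$ applications of Lemma~\ref{lem:integer-areas-0}, giving $O(n^{4}\cdot w)$.

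I do not foresee a real obstacle here: the statement is essentially an immediate corollary of Lemma~\ref{lem:integer-areas-0}. The only minor points to verify are that samples with $|S|\le 2$ (whose convex hull has area zero and therefore contribute only to $\Pr[\area(S)\ge 0]=1$) and the empty sample are correctly absorbed into the partition, and that sharing the dual-arrangement preprocessing across the $n$ invocations does not change the asymptotic bound; both are straightforward.
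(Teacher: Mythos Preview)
Your proposal is correct and follows essentially the same approach as the paper: partition by the topmost point $a$, compute each $\Pr[E_a]$ after an $O(n\log n)$ vertical sort, invoke Lemma~\ref{lem:integer-areas-0} once per $a\in P$, and sum via the law of total probability for a total of $O(n^4\cdot w)$ time. The paper's proof is slightly terser (it omits the discussion of $S=\emptyset$ and degenerate samples), but the argument is identical.
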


\begin{proof}
Observe that we have
\[
	\Pr\Bigl[\area(S)\ge w\Bigr] ~=~ \sum_{a\in P}\Pr\Bigl[\area(S)\ge w\given E_{a}\Bigr]\cdot\Pr\Bigl[E_{a}\Bigr],
\]
and that all probabilities $\Pr[E_{a}]$, $a\in P$, can be computed in $O(n)$ time after an 
$O(n\log n)$-time vertical
sorting preprocessing of $P$.
Using Lemma~\ref{lem:integer-areas-0} to compute $\Pr[\area(S)\ge w\given E_{a}]$ for each $a\in P$,
the overall running time to compute $\Pr[\area(S)\ge w]$ is $O(n^{4}\cdot w)$.
\end{proof}

Before proving the main result of this section (i.e.\ Theorem~\ref{theo:Pr-approx}),
we prove the following useful technical lemma: 

\begin{figure}[t]
    \centering
    \includegraphics[scale=0.65,page=13]{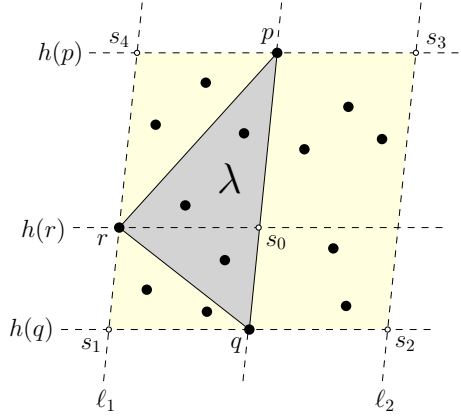}
    \caption{\small{Proof of Lemma~\ref{lemma:lambda}.}}
    \label{fig:fig12}
\end{figure}

\begin{lemma}\label{lemma:lambda}
Let $X$ be a (finite) point set in the plane, $p$ a topmost point of $X$,
$q$ a bottommost point of $X$, and $\lambda$ the area of the triangle of maximum area with vertices $p$, $q$, 
and another point of $X$. Then, we have that:
\[
	\lambda ~\le~ \area(X) ~\le~ 4\lambda.
\]
\end{lemma}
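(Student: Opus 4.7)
The plan is to prove the two inequalities separately. The lower bound $\lambda \le \area(X)$ is immediate: the triangle that realizes $\lambda$ has its three vertices in $X$, so it lies inside the convex hull of $X$, and hence its area is at most $\area(X)$.

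For the upper bound $\area(X) \le 4\lambda$, the idea is to bound the convex hull of $X$ by an explicit parallelogram whose area is at most $4\lambda$. Let $\ell$ be the line through $p$ and $q$, let $r\in X$ be a point on the left side of $\ell$ maximizing the distance $d_L$ to $\ell$, and let $s\in X$ be a point on the right side maximizing the distance $d_R$ to $\ell$ (taking $d_L=0$ or $d_R=0$ if the corresponding side of $\ell$ contains no point of $X$). By the choice of $\lambda$, the triangles $\Delta(p,q,r)$ and $\Delta(p,q,s)$ each have area at most $\lambda$, which gives $\tfrac{1}{2}\,\overline{pq}\cdot d_L \le \lambda$ and $\tfrac{1}{2}\,\overline{pq}\cdot d_R \le \lambda$, and therefore $\overline{pq}\cdot(d_L+d_R)\le 4\lambda$.

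The key geometric step is the observation that the convex hull of $X$ is contained in the parallelogram $K$ cut out by four half-planes: the ones bounded by the horizontal lines $h(p)$ and $h(q)$ (since $p$ is topmost and $q$ bottommost), together with the two lines parallel to $\ell$ through $r$ and $s$ (by the definitions of $d_L$ and $d_R$). I would then apply the standard formula for the area of a parallelogram bounded by two pairs of parallel lines: if the pairs are at perpendicular distances $H$ (the vertical distance between $h(p)$ and $h(q)$) and $d_L+d_R$, meeting at the angle $\beta$ between $\ell$ and the horizontal with $\sin\beta = H/\overline{pq}$, the area comes out to $H\,(d_L+d_R)/\sin\beta = \overline{pq}\cdot(d_L+d_R)$. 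Combined with the previous estimate and $\area(X)\le\area(K)$, this yields $\area(X)\le 4\lambda$.

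The only subtlety I foresee is getting the parallelogram area computation right: it is the lines \emph{parallel} to $\ell$ (not perpendicular to $\ell$) that give the strip of perpendicular width $d_L+d_R$, and the angle between the two pairs of bounding lines then satisfies $\sin\beta = H/\overline{pq}$, so the factor $\overline{pq}$ emerges naturally from $H/\sin\beta$. Beyond that, the argument reduces to a careful bookkeeping of the four bounding half-planes, plus the degenerate checks when $d_L$ or $d_R$ vanishes (in which case $K$ collapses and the claimed bound holds trivially).
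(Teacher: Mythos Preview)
Your proof is correct and follows essentially the same idea as the paper: enclose the convex hull of $X$ in a parallelogram bounded by the horizontal lines $h(p)$, $h(q)$ and two lines parallel to $\ell(p,q)$, then show that this parallelogram has area at most $4\lambda$. The only minor difference is in execution: the paper places \emph{both} parallel-to-$\ell$ sides at the maximal perpendicular distance (taking the line through the point $r$ realizing $\lambda$ and its reflection about $\ell$), obtaining a symmetric parallelogram of area exactly $4\lambda$, and computes that area by cutting it into triangles congruent in pairs; you instead take the tighter (possibly asymmetric) parallelogram through the extreme points on each side and compute its area $\overline{pq}\,(d_L+d_R)\le 4\lambda$ via the angle formula. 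Both arguments are equally valid and equally short.
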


\begin{proof}
Let $r\in X$ be a point such that $\area(\{p,q,r\})=\lambda$, and assume w.l.o.g.\
that $r$ is to the left of the line $\ell(p,q)$. Let $\ell_1$ denote the line through $r$ and
parallel to $\ell(p,q)$, and line $\ell_2$ the reflection of $\ell_1$ about $\ell(p,q)$
(see Figure~\ref{fig:fig12}). Let points 
$s_0=\ell(p,q)\cap h(r)$, $s_1=\ell_1\cap h(q)$, $s_2=\ell_2\cap h(q)$,
$s_3=\ell_2\cap h(p)$, and $s_4=\ell_1\cap h(p)$. Note that triangles
$\Delta(p,r,s_0)$ and $\Delta(p,s_4,r)$ are congruent, and
triangles $\Delta(q,s_0,r)$ and $\Delta(q,r,s_1)$ are congruent.
Furthermore, $X$ is contained in the parallelogram with vertex set $\{s_1,s_2,s_3,s_4\}$.
Then, we have
\begin{eqnarray*}
	\area(X) & \le & \area(\{s_1,s_2,s_3,s_4\}) \\
			 &  =  & 2\cdot \area(\{s_1,q,p,s_4\}) \\
			 &  =  & 2\cdot \Bigl( \area(\{p,r,s_0\}) + \area(\{p,s_4,r\}) + \area(\{q,s_0,r\}) + \area(\{q,r,s_1\}) \Bigr) \\
			 &  =  & 2\cdot \Bigl( 2\cdot \area(\{p,r,s_0\}) + 2\cdot \area(\{q,s_0,r\}) \Bigr)  \\
			 &  =  & 4 \cdot \area(\{p,q,r\})\\
			 &  =  & 4\lambda.
\end{eqnarray*}
Trivially, $\lambda \le  \area(X)$, and the lemma thus follows.
\end{proof}

\begin{theorem}\label{theo:Pr-approx}
Given $\eps\in(0,1)$ and $w\ge 0$, a value $\sigma$ satisfying
\[
	\Pr[\area(S)\ge w] ~\le~ \sigma ~\le~ \Pr[\area(S)\ge (1-\eps)w]
\]
can be computed in $O(n^{6}/\eps)$ time.
\end{theorem}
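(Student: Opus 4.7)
The plan is to reduce the approximate computation to the integer-area setting handled by Lemma~\ref{lem:integer-areas}, via a rounding step applied to every triangle area. Concretely, I would fix a granularity $g := \eps w/(cn)$ for a suitable absolute constant $c$ (e.g.\ $c=2$) and, for every triple $\{p,q,r\}\subseteq P$, define the rounded triangle area
\[
  \tilde A(p,q,r) \;:=\; g\,\left\lceil\frac{\area(\Delta(p,q,r))}{g}\right\rceil,
\]
which is a natural-number multiple of $g$. For a sample $S\subseteq P$ with $|S|\ge 3$, I would let $\tilde{\area}(S)$ be the sum of the values $\tilde A(a,v_i,v_{i+1})$ over the triangles of the fan triangulation of the convex hull of $S$ anchored at the topmost point $a$ of $S$.

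Since any such fan uses at most $n-2$ triangles, each contributing a round-up of at most $g$, we have $\area(S)\le\tilde{\area}(S)\le\area(S)+(n-2)g<\area(S)+\eps w$. This immediately gives the sandwich of events
\[
  \{\area(S)\ge w\}\;\subseteq\;\{\tilde{\area}(S)\ge w\}\;\subseteq\;\{\area(S)>(1-\eps)w\},
\]
and taking probabilities yields that $\sigma := \Pr[\tilde{\area}(S)\ge w]$ already satisfies $\Pr[\area(S)\ge w]\le\sigma\le\Pr[\area(S)\ge(1-\eps)w]$, as the theorem requires. So the whole task becomes the computation of $\sigma$.

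To compute $\sigma$, I would rescale all areas by $1/g$: the rescaled values $\tilde A(p,q,r)/g$ are natural numbers, and the threshold becomes the integer $W := \lceil w/g\rceil = O(n/\eps)$. The dynamic program of Lemma~\ref{lem:integer-areas-0} only ever consumes triangle areas through its transition increments $\area(\{u,u',a\})$ along the fan from the topmost point $a$; replacing each such increment by the rescaled $\tilde A(u,u',a)/g$ therefore makes the DP compute $\Pr[\tilde{\area}(S)\ge W\!\cdot\! g]=\Pr[\tilde{\area}(S)\ge w]$. Invoking Lemma~\ref{lem:integer-areas} on this rescaled instance then returns $\sigma$ in $O(n^4 W)$ time, which is $O(n^5/\eps)$ and well within the claimed $O(n^6/\eps)$ bound.

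The main obstacle, I expect, is verifying that the rounded per-triangle quantities really do combine into a single well-defined sample-level functional that the DP computes: one must check that $\tilde{\area}(S)$ depends only on $S$ (which it does, since the topmost point of $S$ is uniquely determined under the general-position assumption and pins down the fan triangulation), and that the approximation direction stays correct under the rounding-up scheme (which it does, since $\tilde{\area}(S)\ge\area(S)$ pointwise while the total rounding slack is less than $\eps w$). Once these two points are spelled out, the result reduces to a direct call to Lemma~\ref{lem:integer-areas}.
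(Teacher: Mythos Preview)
Your proposal is correct and in fact cleaner than the paper's own proof. The key observation---that the dynamic program of Lemma~\ref{lem:integer-areas-0} only touches areas through the canonical-triangle increments $\area(\{u,u',a\})$, so one may simply feed it rounded values---is exactly right, and the sandwich $\area(S)\le\tilde{\area}(S)<\area(S)+\eps w$ gives the required two-sided bound on $\sigma$ immediately. (The edge case $w=0$ is trivial, with $\sigma=1$.)

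The paper takes a more elaborate route: it conditions on the pair $(p,q)$ of topmost and bottommost sample points and, via Lemma~\ref{lemma:lambda}, on whether the maximum canonical triangle area $\lambda$ lies in $[w/4,w)$; only in that middle range does it round and run the DP, over the restricted point set $P'$. Summing over the $O(n^2)$ pairs $(p,q)$ is what produces the extra factor of $n$ and the stated $O(n^{6}/\eps)$ bound. Your argument dispenses with Lemma~\ref{lemma:lambda} and the $(p,q,\lambda)$ conditioning entirely, applies Lemma~\ref{lem:integer-areas} once with threshold $W=\lceil w/g\rceil=O(n/\eps)$, and lands at $O(n^{5}/\eps)$. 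The paper's decomposition does make the later adaptation to perimeter (Theorem~\ref{theo:Pr-approx-perim}) more uniform, but for the area statement alone your direct rounding is both simpler and asymptotically faster.
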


\begin{proof}
Given two points $p,q\in P$, let $E_{p,q}$ denote the event in which the random sample 
$S\subseteq P$ satisfies that: $p$ is the topmost point of $S$, and $q$ is the bottommost point of $S$.
Conditioned on the event $E_{p,q}$, for two points $p,q\in P$, let $\lambda=\lambda(p,q)$ denote the
area of the triangle of maximum area with vertices $p$, $q$, and another point of $S$.
By Lemma~\ref{lemma:lambda}, we have 
\[
	\lambda ~\le~ \area(S) ~\le~ 4\lambda.	
\]
Furthermore, if $w\leq \lambda$ then 
$\Pr[\area(S)\ge w\given E_{p,q}]=1$, and if
$4\lambda<w$ then $\Pr[\area(S)\ge w\given E_{p,q}]=0$.
Then, we can compute $\Pr[\area(S)\ge w]$ as follows:
\begin{eqnarray}
	\nonumber
	\Pr\bigl[\area(S)\ge w\bigl] & = & \sum_{p,q\in P} \Pr\bigl[E_{p,q}\bigr]\cdot\Pr\bigl[\area(S)\ge w\given E_{p,q}\bigr]  \\
	\nonumber
	& = & \sum_{p,q\in P} \Pr\bigl[E_{p,q}\bigr] \biggl( \Pr\bigl[\area(S)\ge w\given E_{p,q},\lambda\ge w \bigr]\Pr\bigl[\lambda\ge w\given E_{p,q}\bigr] + \\
	\nonumber
	&   & \Pr\left[\area(S)\ge w\given E_{p,q},\lambda\in \left[\tfrac{w}{4},w\right) \right] \cdot \Pr\left[\lambda\in \left[\tfrac{w}{4},w\right)\sgiven E_{p,q}\right] + \\
	\nonumber
	&   & \Pr\left[\area(S)\ge w\given E_{p,q},\lambda < \tfrac{w}{4} \right]
	      \Pr\left[\lambda < \tfrac{w}{4}\sgiven E_{p,q}\right] \biggr)\\
	\nonumber
	& = & \sum_{p,q\in P} \Pr\bigl[E_{p,q}\bigr] \biggl( \Pr\bigl[\lambda\ge w\given E_{p,q}\bigr]+\\
	\label{eq6}
	&   & \Pr\left[\area(S)\ge w\sgiven E_{p,q},\lambda\in \left[\tfrac{w}{4},w\right) \right] \cdot 
	\Pr\left[\lambda\in \left[\tfrac{w}{4},w\right)\sgiven E_{p,q}\right] \biggr).
\end{eqnarray}
For given $p,q\in P$, and $z\ge 0$, 
let $P(p,q,z)\subseteq P$ denote the set of the points $r\in P$ lying in the 
strip bounded by the horizontal lines through $p$ and $q$, respectively, such that
$\area(\{p,q,r\})\ge z$. Since
\[
	\Pr[\lambda\ge z\given E_{p,q}] ~=~ 1 - \prod_{r\in P(p,q,z)}(1-\pi_r),
\]
both $\Pr[\lambda\ge w\given E_{p,q}]$ and $\Pr[\lambda\in [w/4,w)\given E_{p,q}]=\Pr[\lambda\ge w/4\given E_{p,q}]-\Pr[\lambda\ge w\given E_{p,q}]$
can be computed in $O(n)$ time. 
To approximate $\Pr[\area(S)\ge w]$ using equation~\eqref{eq6},
we compute in what follows the value $\sigma_{p,q}\in[0,1]$ as an approximation
to the probability $\Pr[\area(S)\ge w\given E_{p,q},\lambda\in [w/4,w) ]$. 
Let $P'=P(p,q,0)\setminus P(p,q,w)$, and note that $S\subseteq P'$ when conditioned 
on $E_{p,q}$ and $\lambda\in [w/4,w)$.
Let $\theta=\eps/n$.
We round the area $a$ of each triangle defined by three points of $P'$ by 
$\widehat{a}=\lceil\frac{a}{\theta\cdot w}\rceil$, and round the target $w$
by $\widehat{w}=\lfloor\frac{1}{\theta}\rfloor$.
Let $\widehat{\area}(S)$ be the sum
of the rounded areas of the canonical triangles of the convex hull of $S$. 
Given that the algorithm of Lemma~\ref{lem:integer-areas-0} sums areas of canonical
triangles, we can run such an algorithm over $P'$ by assuming that event
$E_p$ is satisfied (i.e.\ $p$ is the topmost point of any random sample
$S\subseteq P'$) and $\pi_q=1$, but considering the rounded areas
instead of the original ones. We can make these assumptions because event $E_{p,q}$ holds.
Doing this, 
we can compute the probability $\Pr[\widehat{\area}(S)\ge \widehat{w}\mid E_p]$
of Lemma~\ref{lem:integer-areas-0}, for $S\subseteq P'$,
in
\[
	O(n^{3}\cdot \widehat{w})~=~O\left(n^{3}\cdot\left\lfloor\frac{1}{\theta}\right\rfloor\right)
    ~=~ O(n^{4}/\eps)
\]
time, and set $\sigma_{p,q}$ to it. 
We now analyse how close $\sigma_{p,q}$ is to $\Pr[\area(S)\ge w\given E_{p,q},\lambda\in [w/4,w) ]$.
Let $S$ be a random sample conditioned on both $E_{p,q}$ and $\lambda\in [w/4,w)$, and so that
the convex hull of $S$ is triangulated into $k$ canonical triangles of areas $a_1,a_2,\ldots,a_k$, respectively.
We have
\[
	w ~\ge~ \theta w\left\lfloor\frac{1}{\theta}\right\rfloor ~=~ \theta w \cdot\widehat{w}
\] 
and
\[
	\theta w\left(\widehat{a_1}+\dots+\widehat{a_k}\right) ~=~ 
		\theta w\left\lceil\frac{a_1}{\theta w}\right\rceil+\dots+
    	\theta w\left\lceil\frac{a_k}{\theta w}\right\rceil 
    ~\ge~ a_1+\dots+a_k.
\]
Then, $a_1+\dots+a_k\ge w$ implies $\widehat{a_1}+\dots+\widehat{a_k}\ge \widehat{w}$.
Hence,
\begin{equation}\label{eq1}
	\Pr\Bigl[\area(S)\ge w\given E_{p,q},\lambda\in [w/4,w) \Bigr]~\le~ \sigma_{p,q}.
\end{equation}
Assume now that $\widehat{a_1}+\dots+\widehat{a_k}\ge \widehat{w}$. 
Then, given that
\[
	\widehat{w} ~=~ \left\lfloor\frac{1}{\theta}\right\rfloor ~\ge~ \frac{1}{\theta}-1
\]
and
\[
	\widehat{a_1}+\dots+\widehat{a_k}  ~=~  
		\left\lceil\frac{a_1}{\theta w}\right\rceil + \dots +
    	\left\lceil\frac{a_k}{\theta w}\right\rceil \\
    ~\le~ \frac{a_1}{\theta w} + \dots + \frac{a_k}{\theta w} + k,
\]
we have
\[
	\frac{a_1}{\theta w} + \dots + \frac{a_k}{\theta w} + k  ~\ge~  \frac{1}{\theta}-1
\]
which implies
\[
	a_1+\dots+a_k   ~\ge~  w - (k+1)\cdot\theta w 
	    ~\ge~  w - n\cdot\theta w 
	    ~=~  (1-n \theta)w 
	    ~=~ (1-\eps)w.
\]
Then, $\widehat{a_1}+\dots+\widehat{a_k}\ge \widehat{w}$ implies $a_1+\dots+a_k\ge (1-\eps)w$.
Therefore,
\begin{equation}\label{eq2}
	\sigma_{p,q} ~\le ~ \Pr\left[\area(S)\ge (1-\eps)w\sgiven E_{p,q},\lambda\in \left[\tfrac{w}{4},w\right) \right].
\end{equation}
We then compute in $O(n^2 \cdot n^4/\eps)=O(n^{6}/\eps)$ time the value
\[
	\sigma  ~=~  \sum_{p,q\in P} \Pr\bigl[E_{p,q}\bigr] \Bigl( \Pr\bigl[\lambda\ge w\given E_{p,q}\bigr] + 
	\sigma_{p,q} \cdot \Pr\left[\lambda\in \left[\tfrac{w}{4},w\right)\sgiven E_{p,q}\right] \Bigr),
\]
which verifies 
\[
	\Pr\Bigl[\area(S)\ge w\Bigr] ~\le~ \sigma
\]
by equations~\eqref{eq6} and~\eqref{eq1}. Let $w_{\eps}=(1-\eps)w<w$.
By equations~\eqref{eq6} and~\eqref{eq2}, $\sigma$ also verifies that
\begin{eqnarray*}
\sigma 
	& \le & \sum_{p,q\in P} \Pr\bigl[E_{p,q}\bigr] \biggl( \Pr\bigl[\lambda\ge w\given E_{p,q}\bigr] + \\
	&    & 
	 \Pr\left[\area(S)\ge w_{\eps}\given E_{p,q},\lambda\in \left[\frac{w}{4},w\right) \right] \cdot 
	 \Pr\left[\lambda\in \left[\frac{w}{4},w\right)\given E_{p,q}\right] \biggr) \\
	& \le & \sum_{p,q\in P} \Pr\bigl[E_{p,q}\bigr] \biggl( \Pr\bigl[\lambda\ge w\given E_{p,q}\bigr] +\\
	&     & \Pr\left[\area(S)\ge w_{\eps}\sgiven E_{p,q},\lambda\in \left[\frac{w_{\eps}}{4},w\right) \right] \cdot 
			\Pr\left[\lambda\in \left[\frac{w_{\eps}}{4},w\right)\sgiven E_{p,q}\right] \biggr)\\
	&  =  & \sum_{p,q\in P} \Pr\bigl[E_{p,q}\bigr] \biggl( \Pr\bigl[\lambda\ge w\given E_{p,q}\bigr] +\\
	&     & \Pr\left[\area(S)\ge w_{\eps}\sgiven E_{p,q},\lambda\in\left[\frac{w_{\eps}}{4},w_{\eps}\right)\right]\cdot
			\Pr\left[\lambda\in \left[\frac{w_{\eps}}{4},w_{\eps}\right)\sgiven E_{p,q}\right] +\\
	&     & \Pr\bigl[\area(S)\ge w_{\eps}\given E_{p,q},\lambda\in [w_{\eps},w) \bigr] \cdot 
			\Pr\bigl[\lambda\in [w_{\eps},w)\given E_{p,q}\bigr] \biggr)\\
	&  =  & \sum_{p,q\in P} \Pr\bigl[E_{p,q}\bigr] \biggl( \Pr\bigl[\lambda\ge w\given E_{p,q}\bigr] +\\
	&     & \Pr\left[\area(S)\ge w_{\eps}\sgiven E_{p,q},\lambda\in \left[\frac{w_{\eps}}{4},w_{\eps}\right)\right]\cdot 
			\Pr\left[\lambda\in \left[\frac{w_{\eps}}{4},w_{\eps}\right)\sgiven E_{p,q}\right] +\\
	&     & \Pr\bigl[\lambda\in [w_{\eps},w)\given E_{p,q}\bigr] \biggr)\\
	&  =  & \sum_{p,q\in P} \Pr\bigl[E_{p,q}\bigr] \biggl( \Pr\bigl[\lambda\ge w_{\eps}\given E_{p,q}\bigr] +\\
	&     & \Pr\left[\area(S)\ge w_{\eps}\sgiven E_{p,q},\lambda\in \left[\frac{w_{\eps}}{4},w_{\eps}\right)\right] \cdot
			\Pr\left[\lambda\in \left[\frac{w_{\eps}}{4},w_{\eps}\right)\sgiven E_{p,q}\right] \biggr)\\
	&  =  & \Pr\bigl[\area(S)\ge (1-\eps)w\bigr].
\end{eqnarray*}
The result thus follows.
\end{proof}

Given the high running time of the algorithm in Theorem~\ref{theo:Pr-approx},
and that it may happen that $\Pr[\area(S)\ge (1-\eps)w]-\Pr[\area(S)\ge w]$ is close to 1,
we give the following simple Monte Carlo algorithm to approximate $\Pr[\area(S)\ge w]$ 
with absolute error and a probability of success.
A similar algorithm was given by Agarwal et al.~\cite{PankajAgarwal2014} to approximate the
probability that a given query point is contained in the convex hull of the probabilistic points.

\begin{theorem}\label{theo:chernoff-apx}
Given $\eps,\delta\in(0,1)$ and $w\ge 0$, a value $\sigma'$ can be computed
in $O(n\log n+ (n/\eps^2)\log(1/\delta))$ time so that with probability at least $1-\delta$
\[
	\Pr[\area(S)\ge w]-\eps ~<~ \sigma' ~<~ \Pr[\area(S)\ge w]+\eps.
\]
\end{theorem}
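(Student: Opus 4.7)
The plan is the standard Monte Carlo estimator: generate $N$ independent random samples $S_1,\dots,S_N\subseteq P$, compute the indicator $X_i=\mathbf{1}[\area(S_i)\ge w]$ for each, and output the empirical mean $\sigma'=\frac{1}{N}\sum_{i=1}^N X_i$. Since $\E[X_i]=\Pr[\area(S)\ge w]$ and each $X_i\in\{0,1\}$, a two-sided Hoeffding (or Chernoff) bound gives
\[
    \Pr\bigl[\,|\sigma'-\Pr[\area(S)\ge w]|\ge \eps\,\bigr] ~\le~ 2\exp(-2\eps^2 N).
\]
Choosing $N=\bigl\lceil\tfrac{1}{2\eps^2}\ln(2/\delta)\bigr\rceil=O\bigl(\tfrac{1}{\eps^2}\log(1/\delta)\bigr)$ makes the right-hand side at most $\delta$, which yields the claimed additive-error guarantee with probability at least $1-\delta$.

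The only nontrivial part is achieving the stated running time. Drawing one sample $S_i$ by the model's assumption costs $O(n)$ (one coin flip per point of $P$). Naively, computing $\area(S_i)$ via a convex hull would cost $O(n\log n)$, giving total time $O((n/\eps^2)\log n\log(1/\delta))$, which is a log factor too much. To shave the logarithm, I would sort the points of $P$ lexicographically once, in $O(n\log n)$ time as a preprocessing step. Given this global sorted order, for any subset $S_i\subseteq P$ we can extract its points in sorted order in $O(n)$ time by a single scan, and then apply Andrew's monotone chain to produce the convex hull (and hence the area by the shoelace formula) in linear time. Thus each sample is processed in $O(n)$ time, and the total cost is $O(n\log n)+O(n)\cdot N=O\bigl(n\log n+(n/\eps^2)\log(1/\delta)\bigr)$, as required.

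There is really no conceptual obstacle here, since the result is essentially a direct application of Hoeffding's inequality combined with the sampling model; the only subtlety is the preprocessing trick to avoid paying $\log n$ per sample when computing the convex hull area. It is worth noting that the additive error is unavoidable if we want this running time, since an $(1+\eps)$ multiplicative guarantee would require $N$ to scale inversely with the unknown probability being estimated.
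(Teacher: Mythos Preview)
Your proposal is correct and essentially identical to the paper's own proof: the same Monte Carlo estimator with $N=\lceil(1/2\eps^2)\ln(2/\delta)\rceil$ samples, the same Chernoff--Hoeffding bound, and the same $O(n\log n)$ sorting preprocessing so that each sample's convex hull (and hence its area) is computed in $O(n)$ time. The paper states the linear-time per-sample hull computation without naming the algorithm; your mention of Andrew's monotone chain just makes this explicit.
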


\begin{proof}
The idea is to use repeated random sampling. Let $S_1,S_2,\ldots,S_N\subseteq P$ be $N$
random samples of $P$, where $N$ is going to be specified later, and let $X_i$ ($i=1,\ldots,N$)
be the indicator variable such that $X_i=1$ if and only if $\area(S_i)\ge w$. Let $\mu=\Pr[\area(S)\ge w]$
and $\sigma'=(1/N)\sum_{i=1}^N X_i$, and note that $\E[X_i]=\mu$.
Using a Chernoff-Hoeffding bound, we have $\Pr[|\sigma'-\mu|\ge \eps]\leq 2\exp(-2\eps^2N)$.
Then, setting $N=\lceil(1/2\eps^2)\ln(2/\delta)\rceil$, we have that $|\sigma'-\mu|<\eps$
with probability at least $1-\delta$. Since after an $O(n\log n)$-time sorting preprocessing of $P$, the convex
hull of each sample $S_i$ can be computed in $O(n)$ time, the running time is 
$O(n\log n+N\cdot n)=O(n\log n+ (n/\eps^2)\log(1/\delta))$. 
\end{proof}

If the coordinates of the points of $P$ belong to some range of bounded size,
then we can round the coordinates of each point of $P$ so that in the resulting
point set every triangle defined by three points has integer area. After that,
we can use Lemma~\ref{lem:integer-areas} over the resulting point set to
approximate the probability $\Pr[\area(S)\ge w]$. This approach is used
in the following result.

\begin{theorem}\label{theo:bounded-domain-apx}
If $P\subset[0,U]^2$ for some $U>0$, then given $\eps\in(0,1)$ and $w\ge 0$
a value $\tilde{\sigma}$ satisfying
\[
	\Pr[\area(S)\ge w+\eps] ~\le~ \tilde{\sigma} ~\le~ \Pr[\area(S)\ge w-\eps]
\]
can be computed in $O(n^4\cdot U^4/\eps^2)$ time.
\end{theorem}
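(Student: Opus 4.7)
The plan is to round each point of $P$ to a fine enough grid so that every triangle has integer area after scaling, and then compute the resulting probability exactly via Lemma~\ref{lem:integer-areas}. Set $\theta=\eps/(cU)$ for a constant $c>0$ to be tuned, let $\tilde{p}$ denote the nearest point of the lattice $\theta\mathbb{Z}^{2}$ to $p$, and form the stochastic point set $\tilde{P}=\{\tilde{p}:p\in P\}$ inheriting probabilities $\pi_{\tilde{p}}=\pi_{p}$. A random sample $S\subseteq P$ corresponds to an induced sample $\tilde{S}\subseteq \tilde{P}$. The first step is the deterministic bound $|\area(S)-\area(\tilde{S})|\le \eps$ for every realization $S$. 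Since $\|p-\tilde{p}\|_{2}\le \theta/\sqrt{2}$, the Hausdorff distance between $\ch(S)$ and $\ch(\tilde{S})$ is at most $\theta/\sqrt{2}$, so by the Steiner formula
\[
|\area(S)-\area(\tilde{S})| ~\le~ \tfrac{\theta}{\sqrt{2}}\cdot \max\bigl\{\perim(S),\perim(\tilde{S})\bigr\}+\tfrac{\pi\theta^{2}}{2}.
\]
Because $P\subset[0,U]^{2}$, both convex hulls are contained in a square of side $U+\theta$, so each perimeter is at most $4(U+\theta)$; choosing $c$ sufficiently large makes the right-hand side at most $\eps$.

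Next, scale $\tilde{P}$ by $2/\theta$ to obtain $P^{\ast}\subset[0,\,2U/\theta+1]^{2}$ with even-integer coordinates. Then every triangle spanned by three points of $P^{\ast}$ has even-integer (hence natural-number) area, so the hypothesis of Lemma~\ref{lem:integer-areas} holds for $P^{\ast}$. Set $w^{\ast}=\lceil (2/\theta)^{2}w\rceil$ and invoke Lemma~\ref{lem:integer-areas} on $(P^{\ast},w^{\ast})$ to produce $\tilde{\sigma}=\Pr[\area(S^{\ast})\ge w^{\ast}]=\Pr[\area(\tilde{S})\ge w]$ in $O(n^{4}\cdot w^{\ast})$ time. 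We may assume $w\le U^{2}$, since larger $w$ gives $\Pr[\area(S)\ge w]=0$ and we can report $0$. With $\theta=\Theta(\eps/U)$ this yields $w^{\ast}=O(U^{2}/\theta^{2})=O(U^{4}/\eps^{2})$, for a total running time of $O(n^{4}U^{4}/\eps^{2})$.

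Finally, the deterministic bound $|\area(S)-\area(\tilde{S})|\le \eps$ gives the two-sided containment
\[
\{\area(S)\ge w+\eps\}~\subseteq~\{\area(\tilde{S})\ge w\}~\subseteq~\{\area(S)\ge w-\eps\},
\]
and taking probabilities yields the desired sandwich $\Pr[\area(S)\ge w+\eps]\le \tilde{\sigma}\le \Pr[\area(S)\ge w-\eps]$. The main obstacle is calibrating $\theta$: the grid must be fine enough that the worst-case area perturbation over all realizations is at most $\eps$, yet coarse enough that the scaled target $w^{\ast}=\Theta(U^{2}/\theta^{2})$ stays $O(U^{4}/\eps^{2})$. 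This trade-off crucially uses the bounded-domain hypothesis $P\subset[0,U]^{2}$, through the $O(U)$ bound on the convex-hull perimeter, and is exactly what forces the $U^{4}$ factor in the running time. Collinearities potentially introduced by rounding can be handled via the standard extension to degenerate inputs alluded to in the introduction.
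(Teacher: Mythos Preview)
Your proof is correct and follows essentially the same approach as the paper: snap each point to a grid of spacing $\Theta(\eps/U)$, scale so that all triangle areas become integers, invoke Lemma~\ref{lem:integer-areas} on the scaled target, and use the $O(\theta U)$ bound on the area perturbation to obtain the two-sided sandwich. The paper combines the rounding and scaling into a single map $(x,y)\mapsto(2\lfloor x/\delta\rfloor,2\lfloor y/\delta\rfloor)$ with $\delta=\eps/(4U)$ and simply asserts the bound $\bigl|\area(S)-(\delta^{2}/4)\area(\tilde S)\bigr|<4\delta U$, whereas your Hausdorff/Steiner argument supplies an explicit justification of that perturbation estimate.
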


\begin{proof}
Let $\delta>0$ be a parameter to be specified later. 
For every random sample $S\subseteq P$, let
\[
	\tilde{S} ~=~ \left\{\left(2\left\lfloor \frac{x}{\delta} \right\rfloor, 
	2\left\lfloor \frac{y}{\delta} \right\rfloor \right) ~:~ (x,y)\in S\right\}.
\]
Note that the area of every triangle defined by three points of $\tilde{S}$ is
a natural number, for every $S\subseteq P$. Furthermore, we have that
\[
	\left| \area(S) - \left(\frac{\delta^2}{4}\right)\area(\tilde{S}) \right| ~<~ 4\delta U.
\]
Using Lemma~\ref{lem:integer-areas}, we can compute the probability
\[
	\tilde{\sigma} ~=~ \Pr\left[\area(\tilde{S}) \ge \left\lceil \frac{4w}{\delta^2} \right\rceil\right]
\]
in $O\left(n^4 \cdot \left\lceil 4w/\delta^2 \right\rceil\right) \subseteq O\left(n^4\cdot U^2/\delta^2\right)$
time. 
If $\area(\tilde{S}) \ge \left\lceil 4w/\delta^2 \right\rceil$, then
\[
	w ~\le~ \area(\tilde{S})\cdot \frac{\delta^2}{4} ~<~ \area(S) + 4\delta U,
\]
which implies $\area(S) ~\ge~ w-4\delta U$. 
Hence, 
\begin{equation}
	\label{eq3}
	\tilde{\sigma}  ~=~ \Pr\left[\area(\tilde{S}) \ge \left\lceil 4w/\delta^2 \right\rceil\right]
		~\le~ \Pr\Bigl[\area(S)\ge w-4\delta U\Bigr].
\end{equation}
If $\area(S)\ge w+4\delta U$, then
\[
	w+4\delta U ~\le~ \area(S) ~<~  \frac{\delta^2}{4}\cdot \area(\tilde{S}) + 4\delta U,
\]
which implies $\area(\tilde{S})\ge \left\lceil 4w/\delta \right\rceil$ since $\area(\tilde{S})\in \mathbb{N}$.
Then, we have that 
\begin{equation}
	\label{eq4}
	\Pr\Bigl[\area(S)\ge w+4\delta U\Bigr] ~\le~ \Pr\left[\area(\tilde{S}) \ge \left\lceil 4w/\delta^2 \right\rceil\right] 
	  ~=~ \tilde{\sigma}.
\end{equation}
Setting $\delta=\frac{\eps}{4U}$, and combining~\eqref{eq3} and~\eqref{eq4},
we have that $\tilde{\sigma}$ satisfies
\[
	\Pr[\area(S)\ge w+\eps] ~\le~ \tilde{\sigma} ~\le~ \Pr[\area(S)\ge w-\eps],
\]
and can be computed in $O(n^4 U^4/\eps^2)$ time.
\end{proof}

\section{Perimeter}\label{sec:perim}

Similar to Lemma~\ref{lem:integer-areas}, we can prove that if all the distances between the elements
of $P$ are considered integer, the probability $\Pr[\perim(S)\ge w]$ can be computed
in $O(n^{4}\cdot w)$ time, for every integer $w\ge 0$. 
Then, using conditioning of the samples and a rounding strategy, we
adapt the arguments of Theorem~\ref{theo:Pr-approx} to obtain the following
result:
\begin{theorem}\label{theo:Pr-approx-perim}
Given $\eps\in(0,1)$ and $w\ge 0$, a value $\sigma'$ satisfying
\[
	\Pr[\perim(S)\ge w] ~\le~ \sigma' ~\le~ \Pr[\perim(S)\ge (1-\eps)w]
\]
can be computed in $O(n^{6}/\eps)$ time.
\end{theorem}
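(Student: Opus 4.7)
The plan is to mirror the three-layer structure of the proof of Theorem~\ref{theo:Pr-approx}: an exact dynamic program under an integrality assumption on pairwise distances, a constant-factor size bound, and a rounding argument lifting the exact DP to a one-sided $(1-\eps)$-approximation.

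First, I would establish the perimeter analog of Lemmas~\ref{lem:integer-areas-0} and~\ref{lem:integer-areas}: if every pairwise distance $\overline{uv}$ with $u,v\in P$ is a natural number, then for each integer $w\geq 0$ the probability $\Pr[\perim(S)\geq w]$ can be computed in $O(n^{4}\cdot w)$ time. The argument transfers essentially verbatim from Lemma~\ref{lem:integer-areas-0}: condition on the topmost point $a$ of $S$, use the same table $T[u,v,z]$ indexed by the state $(u,v)\in\mathbf{P}_a$ and by the remaining target $z\in[0..w]$, and the same $O(n)$ radial transition. The only change is that each transition contributes the edge length $\overline{uu'}\in\N$ to the accumulator rather than the canonical triangle area $\area(\{u,u',a\})$; since $\perim(S)$ is exactly the sum of hull edge lengths traversed counter-clockwise from $a$, the decomposition is actually cleaner than the fan-triangulation decomposition of area.

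Next, I would prove the perimeter analog of Lemma~\ref{lemma:lambda}: for any finite planar point set $X$ with topmost point $p$, letting $\mu=\max_{r\in X\setminus\{p\}}\overline{pr}$, we have
\[
2\mu~\leq~\perim(X)~\leq~2\pi\mu.
\]
The upper bound is elementary: $X$ lies in the closed disk of radius $\mu$ centered at $p$, so its convex hull does too, and any convex region inscribed in a disk of radius $\mu$ has perimeter at most $2\pi\mu$. The lower bound uses $\mu\leq\mathrm{diam}(X)$ together with the observation that the hull boundary contains two disjoint arcs between any diametrical pair, giving $\perim(X)\geq 2\,\mathrm{diam}(X)\geq 2\mu$. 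Hence $\mu$ is a constant-factor proxy for $\perim(X)$ computable from a single hull vertex.

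Finally, mirroring the bookkeeping of Theorem~\ref{theo:Pr-approx}, I would condition on $E_{p,q}$ (topmost $p$, bottommost $q$) and split on $\mu$: if $\mu\geq w/2$ then $\perim(S)\geq w$ almost surely; if $\mu<w/(2\pi)$ then $\perim(S)<w$ almost surely; otherwise $\mu\in[w/(2\pi),w/2)$ and we approximate. The marginal probability of $\mu$ falling in any half-open interval, given $E_{p,q}$, is a product of $(1-\pi_r)$ factors computable in $O(n)$ time, exactly as for $\lambda$ in the paragraph preceding equation~\eqref{eq6}. For the middle bucket, restrict $S$ to the set $P'$ of points in the open strip between $h(p)$ and $h(q)$ at distance less than $w/2$ from $p$, set $\theta=\eps/(n+1)$, round each edge length $\overline{uv}$ to $\widehat{\ell}(u,v)=\lceil \overline{uv}/(\theta w)\rceil$ and the target to $\widehat{w}=\lfloor 1/\theta\rfloor$, and invoke the DP of the first paragraph on $P'$ with $\pi_q$ set to $1$; this returns $\sigma_{p,q}$ in $O(n^{3}\widehat{w})=O(n^{4}/\eps)$ time. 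Since any hull has at most $n$ edges, the same ceiling/floor inequalities as in~\eqref{eq1}--\eqref{eq2} yield
\[
\Pr[\perim(S)\geq w\given E_{p,q},\mu\in[\tfrac{w}{2\pi},\tfrac{w}{2})]~\leq~\sigma_{p,q}~\leq~\Pr[\perim(S)\geq(1-\eps)w\given E_{p,q},\mu\in[\tfrac{w}{2\pi},\tfrac{w}{2})].
\]
Summing over the $O(n^{2})$ pairs $(p,q)$ produces $\sigma'$ in $O(n^{6}/\eps)$ time; the telescoping at the end of Theorem~\ref{theo:Pr-approx}'s proof carries over unchanged (with the constants $1/4$ and $1$ replaced by $1/(2\pi)$ and $1/2$) to give $\sigma'\leq\Pr[\perim(S)\geq(1-\eps)w]$. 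The main anticipated obstacle is minor bookkeeping in this last step: verifying that the pair-conditioning $E_{p,q}$ remains compatible with a $\mu$-bucketing that depends only on $p$ (so that the $\mu$-marginal is still a product over the points below $p$), and that the slightly smaller $\theta=\eps/(n+1)$ absorbs the extra edge that perimeter counts compared to the canonical triangles of a fan triangulation.
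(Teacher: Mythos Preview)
Your proposal is correct and follows exactly the route the paper sketches: an integer-distance analogue of Lemma~\ref{lem:integer-areas-0}/\ref{lem:integer-areas} for perimeter, a constant-factor size proxy replacing Lemma~\ref{lemma:lambda}, and the same conditioning-plus-rounding scheme of Theorem~\ref{theo:Pr-approx}. Your choice of $\mu=\max_{r\neq p}\overline{pr}$ (depending only on the topmost point) together with the bound $2\mu\le\perim(X)\le 2\pi\mu$ is a clean instantiation of the proxy the paper leaves unspecified, and the adjustment $\theta=\eps/(n+1)$ correctly accounts for the hull having up to $n$ edges rather than $n-2$ canonical triangles.
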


We can further show that Theorem~\ref{theo:chernoff-apx} also holds if perimeter is used instead of area,
as stated in the next more general theorem.
\begin{theorem}
Let $\mathsf{m}:2^P\rightarrow\mathbb{R}$ be a function such that
after a $T(n)$-time preprocessing of $P$ the value of $\mathsf{m}(S)$ can
be computed in $C(n)$ time, for all $S\subseteq P$. 
Given $\eps,\delta\in(0,1)$ and $w\ge 0$, a value $\sigma'$ can be computed
in $O(T(n)+ C(n)\cdot(1/\eps^2)\log(1/\delta))$ time so that with probability at least $1-\delta$
\[
	\Pr[\mathsf{m}(S)\ge w]-\eps ~<~ \sigma' ~<~ \Pr[\mathsf{m}(S)\ge w]+\eps.
\]
\end{theorem}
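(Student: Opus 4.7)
The plan is to carry over verbatim the Monte Carlo strategy of Theorem~\ref{theo:chernoff-apx}, replacing $\area$ by the abstract measure $\mathsf{m}$. After performing the $T(n)$-time preprocessing of $P$ once, I would draw $N$ independent random samples $S_1,\ldots,S_N\subseteq P$ and define the indicator variables $X_i=\mathbf{1}[\mathsf{m}(S_i)\ge w]$; the output estimator is $\sigma'=\frac{1}{N}\sum_{i=1}^N X_i$. Since each $X_i\in\{0,1\}$ is Bernoulli with mean $\mu:=\Pr[\mathsf{m}(S)\ge w]$, the value $\sigma'$ is an unbiased estimator of $\mu$.

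The concentration step is the only analytical content: by the Chernoff--Hoeffding bound applied to $[0,1]$-valued i.i.d.\ variables,
\[
\Pr\bigl[\,|\sigma'-\mu|\ge\eps\,\bigr]~\le~2\exp(-2\eps^2 N).
\]
Choosing $N=\lceil (1/(2\eps^2))\ln(2/\delta)\rceil$ makes the right-hand side at most $\delta$, so $|\sigma'-\mu|<\eps$ holds with probability at least $1-\delta$, matching the two-sided guarantee in the statement.

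For the running time, the preprocessing costs $T(n)$. Generating each sample $S_i$ takes $O(n)$ time under the paper's unipoint model, and evaluating $\mathsf{m}(S_i)$ takes $C(n)$ time by hypothesis; treating $C(n)=\Omega(n)$ (or absorbing the $O(n)$ sampling cost into $C(n)$ as in Theorem~\ref{theo:chernoff-apx}, where $C(n)=O(n)$ after the $O(n\log n)$ sort), the total cost of the $N$ trials is $O(N\cdot C(n))=O(C(n)\cdot(1/\eps^2)\log(1/\delta))$. Adding the preprocessing yields the claimed $O(T(n)+C(n)\cdot(1/\eps^2)\log(1/\delta))$ bound.

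There is no serious obstacle: the argument is a routine abstraction of Theorem~\ref{theo:chernoff-apx} from $\area$ to any efficiently-evaluable measure $\mathsf{m}$. The only mild subtlety is making the accounting for sample generation explicit, which is handled by the convention that $C(n)$ includes, or dominates, the $O(n)$ cost of producing a single sample from $P$ once the preprocessing is done.
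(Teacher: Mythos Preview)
Your proposal is correct and matches the paper's approach: the paper does not give an explicit proof of this theorem but simply presents it as the obvious generalization of Theorem~\ref{theo:chernoff-apx}, whose Monte Carlo argument (sample $N=\lceil(1/2\eps^2)\ln(2/\delta)\rceil$ times, take the empirical mean of the indicators, apply Chernoff--Hoeffding) you have reproduced exactly with $\area$ replaced by $\mathsf{m}$. Your remark about absorbing the $O(n)$ sampling cost into $C(n)$ is also in line with the paper's convention, as reflected in its note that $C(n)=O(n)$ for $\mathsf{m}\in\{\area,\perim\}$.
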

Note that for $\mathsf{m}\in\{\area,\perim\}$ we will have
$T(n)=O(n\log n)$ and $C(n)=O(n)$.
We complement this section by proving that, in general, computing 
the probability $\Pr[\perim(S)\ge w]$ is \#P-hard. 
The arguments are similar to that
of Theorem~\ref{theo:hard-prob}, but the proof requires several key details to deal with
distances between points, expressed by square roots. We note that this
hardness result (see next Theorem~\ref{theo:hard-prob-perim}) is weaker than that of 
Theorem~\ref{theo:hard-prob} in the sense that it uses points with two different probabilities.

\begin{theorem}\label{theo:hard-prob-perim}
Given a stochastic point set $P$ at rational coordinates, an integer $w>0$,
and a probability $\rho\in(0,1)$,
it is \#P-hard to compute the probability $\Pr[\perim(S)\ge w]$ that
the perimeter of the convex hull of a random sample $S\subseteq P$ is at least $w$,
where each point of $P$ is included in $S$ independently with a probability in $\{\rho,1\}$.
\end{theorem}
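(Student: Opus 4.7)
The plan is to mirror Theorem~\ref{theo:hard-prob}: reduce from the size-restricted \#SubsetSum instance $(\{a_1,\ldots,a_n\},t,k)$, where only size-$k$ subsets of indices can sum to $t$ (obtained by the same additive shift used in the area case), and construct in polynomial time a convex-position point set $P=\{p_1,\ldots,p_{n+1}\}\cup\{q_1,\ldots,q_n\}$ appearing in cyclic clockwise order $p_1,q_1,p_2,q_2,\ldots,p_n,q_n,p_{n+1}$. I exploit the relaxed two-probability model by setting $\pi_{p_i}=1$ for every $i$ and $\pi_{q_j}=\rho$ for every $j$, so the $p_i$'s are always present and the convex hull of $S$ is the polygon $p_1\cdots p_{n+1}$ with an outward bump at every $q_j\in S$. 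The perimeter then decomposes as
\[
\perim(S) ~=~ B + \sum_{j\in J_S} c_j, \qquad
c_j ~:=~ \overline{p_jq_j}+\overline{q_jp_{j+1}}-\overline{p_jp_{j+1}},
\]
where $J_S=\{j:q_j\in S\}$ and $B$ is the constant perimeter of the polygon $p_1\cdots p_{n+1}$. If $c_j=b\cdot a_j$ for a common rational $b\ge 1$, then the possible values of $\perim(S)$ are spaced by at least $b$, and the same two-oracle-call argument as in Theorem~\ref{theo:hard-prob} recovers $f(t)$: pick the unique integer $w$ with $w\le B+bt<w+1$ (computing $B$ to sufficient precision, a standard polynomial-time task since $B$ is a sum of square roots of known rationals), call the oracle at $w$ and $w+1$, and use that $\Pr[\perim(S)\ge w]-\Pr[\perim(S)\ge w+1]$ equals $\Pr[\perim(S)=B+bt]=f(t)\cdot\rho^k(1-\rho)^{n-k}$.

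The main obstacle, absent in the area case, is that $c_j$ is a sum of two square roots and is generically irrational. My strategy is to make every triangle $p_jq_jp_{j+1}$ a rational scaling of a fixed Pythagorean shape, namely the isosceles triangle of side lengths $5,5,6$ obtained by gluing two mirrored $(3,4,5)$ right triangles along the leg of length $4$. Scaled by $\alpha_j=a_j/4$, its three sides have rational lengths $5\alpha_j,5\alpha_j,6\alpha_j$, so $c_j=10\alpha_j-6\alpha_j=4\alpha_j=a_j$ exactly, giving $b=1$.

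To chain the triangles in convex position with rational coordinates, I would build the polygon edge by edge using Pythagorean-angle rotations. Starting at $p_1=(0,0)$, for each $j$ I set $p_{j+1}=p_j+6\alpha_j(\cos\theta_j,\sin\theta_j)$ and $q_j=p_j+3\alpha_j(\cos\theta_j,\sin\theta_j)+4\alpha_j(\sin\theta_j,-\cos\theta_j)$, where each angle $\theta_j$ is drawn from the dense set of angles with rational cosine and sine (supplied by Pythagorean triples $(m,\ell,r)$ with $\cos\theta=m/r$, $\sin\theta=\ell/r$); this preserves rationality of every coordinate. The $\theta_j$'s are picked strictly monotonically in $j$, by a tiny amount, so that the polygon $p_1\cdots p_{n+1}$ is strictly convex with total turn well below $\pi$; and the triangle heights $4\alpha_j$ are kept small relative to the edge lengths $6\alpha_j$ (by shrinking the angular steps if necessary) so that every apex $q_j$ lies strictly outside the polygon and consecutive apices remain in convex order. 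The hardest part of the argument is precisely this geometric bookkeeping---simultaneously guaranteeing (i) rational coordinates (via Pythagorean-angle rotations), (ii) convex position of the whole $P$ (via monotone small angles), and (iii) the exact identity $c_j=a_j$ (by the Pythagorean choice of triangle)---after which the extraction of $f(t)$ proceeds exactly as in Theorem~\ref{theo:hard-prob}.
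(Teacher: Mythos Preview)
Your overall strategy---making the perimeter increments $c_j$ \emph{exactly} rational by using Pythagorean-shaped isosceles bumps---is a genuinely different route from the paper's. The paper does not try to make the increments rational at all: it fixes an outer $4n$-gon with all edges of equal rational length $c$ (via the rational parametrisation $v_k=c\bigl(\tfrac{k^2-1}{k^2+1},\tfrac{2k}{k^2+1}\bigr)$), plants deterministic interior points $q_j$ with $\overline{p_jq_j}=\overline{q_jp_{j+1}}\in[c-a_j,\,c-a_j+\tfrac{1}{2n})$, and makes the \emph{outer} points $s_j$ stochastic. The perimeter then satisfies $\bigl\lfloor\perim(S)\bigr\rfloor=L-2\sum_{j:s_j\notin S}a_j$, so two oracle calls at the integers $L-2t$ and $L-2t+1$ recover $f(t)$. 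So the paper trades exactness for an $\varepsilon$-approximation plus a floor trick, whereas you aim for exactness via Pythagorean triples.

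Your approach, however, has a genuine gap in the convexity argument. With the $(3,4,5)$ shape, write $\phi=\arctan(4/3)$. Along your chain the edge directions are
\[
p_j\to q_j:\ \theta_j-\phi,\qquad q_j\to p_{j+1}:\ \theta_j+\phi,\qquad p_{j+1}\to q_{j+1}:\ \theta_{j+1}-\phi,
\]
so the (signed) turn at $q_j$ is $2\phi\approx 106^\circ$ and the turn at $p_{j+1}$ is $(\theta_{j+1}-\theta_j)-2\phi$. Convexity forces $\theta_{j+1}-\theta_j>2\phi$ for every $j$, and already the turns at the $n$ apices alone sum to $2n\phi$, which exceeds $2\pi$ once $n\ge 4$. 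Thus your point set cannot be in convex position for $n\ge 4$; the remark that ``the triangle heights $4\alpha_j$ are kept small relative to the edge lengths $6\alpha_j$'' cannot help, since that ratio is fixed at $2/3$ by the very Pythagorean shape you chose. The fix is to replace $(3,4,5)$ by a flat triple such as $(N^2-1,\,2N,\,N^2+1)$ with $N=\Theta(n)$: the bump increment is still $2r-2m=4$ (so $c_j=4\alpha_j=a_j$ as before), but now $2\phi=2\arctan\!\bigl(\tfrac{2N}{N^2-1}\bigr)=O(1/n)$, and convexity is achievable with tiny Pythagorean angle steps $\theta_{j+1}-\theta_j$.

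A secondary point: you describe $B$ as ``a sum of square roots of known rationals'' whose floor is a ``standard polynomial-time task.'' In general, deciding the sign (hence the floor) of a sum of square roots is a famously open problem. In your actual construction, though, every edge $\overline{p_jp_{j+1}}=6\alpha_j$ is rational, so $B$ is a rational plus a \emph{single} square root $\overline{p_{n+1}p_1}$, and then $\lfloor B+t\rfloor$ is indeed computable in polynomial time by exact comparison of $(\,\cdot\,)^2$ with a rational. You should say this explicitly rather than invoke the general (and unjustified) claim.
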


\begin{proof}
We show a Turing reduction from the version of the \pb{\#SubsetSum}~\cite{faliszewski2009},
in which given numbers $\{a_1,\ldots,a_n\}\subset\N$, a target $t$, and value $k\in[1..n]$,
counts the number of subsets $J$ such that $|J|=k$ and $\sum_{j\in J} a_j=t$. 
Let $(\{a_1,\ldots,a_n\},t,k)$
be an instance of this \pb{\#SubsetSum}. 
We assume that
$\{a_1,\ldots,a_n\}$ and $t$ are such that only subsets $J$ satisfying $|J|=k$
ensure that $\sum_{j\in J} a_j=t$ (see the proof of Theorem~\ref{theo:hard-prob}).
Furthermore, each of the numbers $a_1,\ldots,a_n$ can be represented
in a polynomial number of bits (refer to the NP-completeness proof of the
\pb{SubsetSum}~\cite{Garey1979}), then the base-2 logarithm
of each of them is polynomially bounded.
Let $c\in\N$ be a big enough and polynomially bounded number that will be specified
later. For every $k\in[1..2n]$, let $v_k$ denote de vector
\[
	v_k ~=~ \left(c\cdot\frac{k^2-1}{k^2+1},c\cdot\frac{2k}{k^2+1}\right).
\]
Let $p_1=(0,0)$, and for $i=1,\dots,n$, let $s_i=p_i+v_{2i-1}$ and $p_{i+1}=s_i+v_{2i}$.
Let $z_1=p_{n+1}-v_{1}$, and for $j=2,\ldots,2n-1$, let $z_j=z_{j-1}-v_j$.
Note that the $4n$ points $p_1,s_1,p_2,s_2,\ldots,p_n,s_n,p_{n+1},$ $z_1,\ldots z_{2n-1}$
are at rational coordinates and in convex position, and appear in this order clockwise. 
Further note that each edge of the convex hull of those points has length precisely $c$, and 
that the perimeter is equal to $L=4n\cdot c\in\N$ (see Figure~\ref{fig:fig10}). 

\begin{figure}
    \centering
    \includegraphics[scale=0.7,page=10]{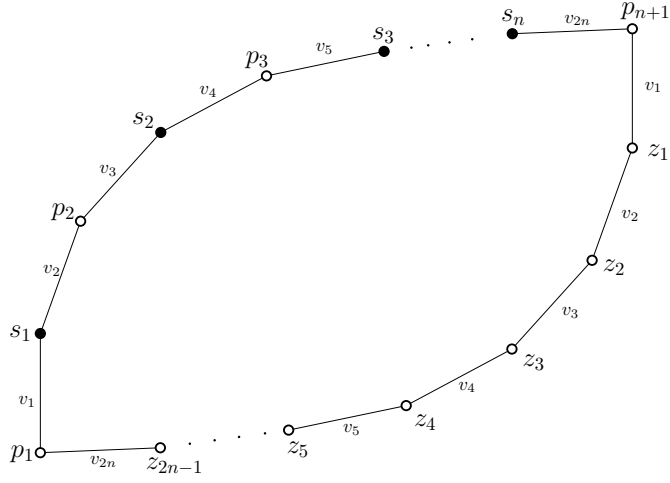}
    \caption{\small{The points $p_1,s_1,p_2,s_2,\ldots,p_n,s_n,p_{n+1}$, $z_1,\ldots z_{2n-1}$
    built using the vectors $v_1,v_2,\ldots,v_{2n}$.}}
    \label{fig:fig10}
\end{figure}

Let $\eps=1/(2n)$.
For every $i\in[1..n]$, we build in polynomial time 
the point $q_i\in\Q^2$ in the triangle $\Delta(p_i,s_i,p_{i+1})$ so that
\[
	c-a_i ~\leq~ \overline{p_iq_i} ~=~ \overline{q_ip_{i+1}} ~<~ (c-a_i)+\eps.
\]
The value of $c$ is selected so that the point $q_i$ exists for every $i\in[1..n]$.
Let $P$ denote the point set $\{p_1,s_1,p_2,s_2,\ldots,p_n,s_n,p_{n+1},z_1,\ldots z_{2n-1}\}\cup\{q_1,\ldots,q_n\}$,
and let $\pi_u=1$ for all $u\in\{p_1,p_2,\ldots,p_n,p_{n+1},z_1,\ldots z_{2n-1}\}\cup\{q_1,\ldots,q_n\}$,
and $\pi_v=\rho$ for all $v\in\{s_1,\ldots,s_n\}$. Let $S\subseteq P$ be any random sample of $P$,
$J_S=\{j\in[1..n]\given s_j\notin S\}$, and $\eps_j=\overline{p_jq_j}-(c-a_j)$ for every $j\in[1..n]$. Observe that
\begin{eqnarray*}
	\perim(S) & = & 2n\cdot c ~+~ \sum_{j\in J_S} 2\cdot \overline{p_jq_j} ~+~ \sum_{j\notin J_S} 2c\\
			  & = & 2n\cdot c ~+~ \sum_{j\in J_S} 2\left((c-a_i)+\eps_j\right) ~+~ \sum_{j\notin J_S} 2c\\
			  & = & L ~-~ 2\sum_{j\in J_S} a_j ~+~ 2\sum_{j\in J_S}\eps_j,
\end{eqnarray*}
which implies that
\[
	L ~-~ 2\sum_{j\in J_S} a_j ~=~ \left\lfloor\perim(S)\right\rfloor,
\]
given that
\[
	0 ~\leq~ 2\sum_{j\in J_S}\eps_j ~<~ 2|J_S|\cdot \eps ~\leq~ 2n\cdot \eps ~=~ 1.
\]
For $x\in \N$, let $f(x)$ denote the number of subsets $J\subseteq[1..n]$ with $x=\sum_{i \in J} a_i$,
which satisfy $|J|=k$.
For every $J\subseteq [1..n]$, the probability
that $J_S=J$ is precisely $(1-\rho)^{|J|}\rho^{n-|J|}$. 
Then,
\[
	\Pr\bigl[\left\lfloor\perim(S)\right\rfloor = L - 2t\bigr] ~=~ \Pr\left[\sum_{j\in J_S}a_j=t,|J_S|=k\right] 
	     ~=~ f(t) \cdot (1-\rho)^{k}\rho^{n-k}.
\]
Hence, computing $\Pr[\perim(S)\ge w]$ is \#P-hard since 
\[
	\Pr\bigl[\left\lfloor\perim(S)\right\rfloor ~=~ L - 2t\bigr]
	~=~ \Pr\bigl[\perim(S) \ge L - 2t\bigr] - \Pr\bigl[\perim(S) \ge L - 2t + 1\bigr].
\]

We show now how to compute the value of $c$, and how to compute the point $q_i$ for every $i\in[1..n]$.
Consider the isosceles triangle $\Delta(p_i,s_{i},p_{i+1})$ (see Figure~\ref{fig:fig11}).
\begin{figure}
    \centering
    \includegraphics[scale=0.7,page=11]{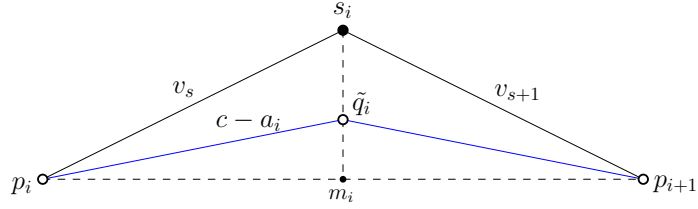}
    \caption{\small{Construction of the point $q_i$.}}
    \label{fig:fig11}
\end{figure}
Let $m_i$ denote the midpoint of the segment $p_ip_{i+1}$, and $s=2i-1$.
To ensure the existence of a point $\tilde{q_i}\in s_im_i$ such that
$\overline{p_i\tilde{q_i}}=c-a_i$, we need to guarantee that 
\begin{eqnarray*}
	 (c-a_i)^2 
	& > & \overline{p_im_i}^2\\
			  & = & \frac{c^2}{4}\left( \left(\frac{s^2-1}{s^2+1} + \frac{(s+1)^2-1}{(s+1)^2+1}\right)^2 + 
			  \left(\frac{2s}{s^2+1} + \frac{2(s+1)}{(s+1)^2+1}\right)^2 \right)\\
			  & = & \frac{c^2}{2}\left( 1 + \frac{s^2-1}{s^2+1}\cdot \frac{(s+1)^2-1}{(s+1)^2+1} + 
			  \frac{2s}{s^2+1}\cdot\frac{2(s+1)}{(s+1)^2+1} \right)\\
			  & = & \frac{c^2}{2}\left(1+\frac{s^4+2s^3+3s^2+2s}{s^4+2s^3+3s^2+2s+2}\right)\\
			  & = & c^2\left(1-\frac{1}{s^4+2s^3+3s^2+2s+2}\right),
\end{eqnarray*}
which holds if
\[
	\left(1-\frac{a_i}{c}\right)^2 ~\ge~ \left(1-\frac{1}{20s^4}\right)^2~~~(\text{i.e. }c~\geq~20s^4a_i)
\]
since
\[
	\left(1-\frac{1}{20s^4}\right)^2 ~>~ 1-\frac{1}{10s^4}
			~\ge~  1-\frac{1}{s^4+2s^3+3s^2+2s+2}.
\]
Then, we set $c=20\cdot(2n)^4\cdot \max\{a_1,\ldots,a_n\}=320\cdot n^4\cdot\max\{a_1,\ldots,a_n\}$.

Let $d=\overline{p_im_i}$ and $z=\overline{\tilde{q_i}m_i}^2=(c-a_i)^2-d^2\in\Q$.
The point $q_i$ is a point in the segment $s_im_i$, that is close to $\tilde{q_i}$, such that, if $h$ denotes the
distance $\overline{q_im_i}$,
then $h$ is rational and satisfies
\[
	\sqrt{z} ~\leq~ h  ~<~ \sqrt{z} + \delta,
\]
where $\delta=\frac{1}{2^{k+1}}$ and $k=\lfloor\log_2 ((1+2\sqrt{z})/\eps^2)\rfloor$.
Note that $k$ can be computed in $O(\log(z/\eps))\subseteq O(\log(c/\eps))\subseteq O(\log n+\log c)\subseteq O(\log c)$ time, 
which polynomial in the size of the input.
Further note that $h$ can be found, by using a binary search, in
polynomial $O(\log(\sqrt{z}/\delta))\subseteq O(\log c )$ time. 
Then, we have
\[
	h^2 - z  ~=~ (h - \sqrt{z})(h + \sqrt{z}) 
			 ~<~ \delta(\delta + 2\sqrt{z}) 
			 ~<~ \delta(1 + 2\sqrt{z}) 
			 ~<~  \eps^2,
\]
which implies
\[
	(c-a_i)^2  ~\le~  d^2 + h^2  ~<~  (c-a_i)^2+\eps^2  ~<~  \left( (c-a_i)+\eps\right)^2.
\]
Hence, 
\[
	c-a_i  ~\le~  \sqrt{d^2 + h^2} 
	       ~=~  \overline{p_iq_i} 
	       ~=~  \overline{q_ip_{i+1}}
	       ~<~  (c-a_i)+\eps.
\]
Since the slope of the line $\ell(p_i,p_{i+1})$ is rational, the slope of $\ell(s_i,m_i)$ is also rational.
Then, $q_i$ has rational coordinates since $\overline{q_im_i}=h\in \Q$. 
\end{proof}

\section{Discussion}\label{sec:conclusions}

The results of this paper consider the unipoint model: each point has a fixed location
but exists with a given probability.
The arguments given for approximating the probability distribution functions of area and perimeter, respectively,
seem not to work in the multipoint model, in which
each point exists probabilistically at one of multiple possible sites.
For the unipoint model, both the expectation and the probability distribution function of the
number of vertices in the convex hull can be computed exactly in polynomial time. It suffices to 
consider either that the area of each triangle defined by three points is equal to one, or that the 
segment defined by each pair of points has length equal
to one, and then use Lemma~\ref{lem:integer-areas} of this paper.
With respect to our dynamic-programming approaches,
similar dynamic-programming algorithms have been given by
Eppstein et al.~\cite{eppstein1992}, Fischer~\cite{fischer1997}, and Bautista et al.~\cite{bautista2011}.

%

\small

\bibliographystyle{abbrv}
\bibliography{references}

\end{document}